\newtheorem{thm}{Theorem}[section]
\newtheorem{prop}[thm]{Proposition}
\newtheorem{lemma}[thm]{Lemma}
\newtheorem{remark}[thm]{Remark}
\newtheorem{claim}[thm]{Claim}
\newtheorem{defi}[thm]{Definition}
\newtheorem{cnst}[thm]{Construction}
\newcommand{\ceil}[1]{\left \lceil #1 \right \rceil}
\newcommand{\Fq}{\mathbb{F}_q}
\newcommand{\Fqm}[1]{\mathbb{F}_{q^{#1}}}
\newcommand{\cC}{\mathcal{C}}
\newcommand{\ourdel}{1/54}
\newcommand{\ourate}{(1-54\cdot \delta)/1216}
\newcommand{\cCin}{\mathcal{C}_{\textup{in}}}
\newcommand{\cCout}{\mathcal{C}_{\textup{out}}}
\newcommand{\rRin}{\mathcal{R}_{\textup{in}}}
\newcommand{\rRout}{\mathcal{R}_{\textup{out}}}
\newcommand{\delin}{\delta_{\textup{in}}}
\newcommand{\delout}{\delta_{\textup{out}}}
\newcommand{\epsout}{\varepsilon_{\textup{out}}}
\newcommand{\outbuf}{0^{(\textup{out})}}
\newcommand{\inbuf}{0^{(\textup{in})}}
\newcommand\blfootnote[1]{%
  \begingroup
  \renewcommand\thefootnote{}\footnote{#1}%
  \addtocounter{footnote}{-1}%
  \endgroup
}
\DeclareMathOperator{\poly}{poly}
\newcommand{\F}{\mathbb{F}}
\newcommand{\cR}{\mathcal{R}}
\newcommand{\ed}{\text{ED}}
\begin{document}
	
	\title{Explicit and Efficient Constructions of linear Codes Against Adversarial Insertions and Deletions}
\author{
	Roni Con\thanks{Blavatnik School of Computer Science, Tel Aviv University, Tel Aviv, Israel. Email: roni.con93@gmail.com} \and 
	Amir Shpilka \thanks{Blavatnik School of Computer Science, Tel Aviv University, Tel Aviv, Israel. Email: shpilka@tauex.tau.ac.il. The research leading to these results has received funding from the Israel Science Foundation (grant number 514/20) and from the Len Blavatnik and the Blavatnik Family Foundation.} \and
	Itzhak Tamo \thanks{Department of EE-Systems, Tel Aviv University, Tel Aviv, Israel. Email: zactamo@gmail.com.}
}
\date{}
\maketitle
\begin{abstract}
\blfootnote{The work of Itzhak Tamo and Roni Con was partially supported by the European Research Council (ERC grant number 852953) and by the Israel Science Foundation (ISF grant number 1030/15).}
\sloppy In this work, we  study linear error-correcting codes against adversarial  insertion-deletion (insdel) errors, a topic that has recently gained a lot of attention.
	
    We construct linear codes over $\F_q$, for $q=\poly(1/\varepsilon)$, that can efficiently decode from a  $\delta$ fraction of insdel errors and have rate $(1-4\delta)/8-\varepsilon$. We also show that by allowing codes over $\F_{q^2}$ that are linear over $\F_q$, we can improve the rate to $(1-\delta)/4-\varepsilon$ while not sacrificing efficiency. Using this latter result, we construct fully linear codes over $\F_2$ that can efficiently correct up to $\delta < \ourdel$ fraction of deletions and have rate $R = \ourate$. 
    Cheng, Guruswami, Haeupler, and Li \cite{cheng2020efficient} constructed codes with (extremely small) rates bounded away from zero that can correct up to a $\delta < 1/400$ fraction of insdel errors. They also  posed the problem of constructing linear codes that get close to the \emph{half-Singleton bound} (proved in \cite{cheng2020efficient}) over small fields. Thus, our results significantly improve their construction and get much closer to the bound.
    
\end{abstract}
\thispagestyle{empty}
\newpage
\section{Introduction}
	Error-correcting codes are among the most widely used tools and objects of study in information theory and theoretical computer science. The most common model of corruption that is studied in the TCS literature is  that of errors or erasures. The model in which each symbol of the transmitted word is either replaced with a different symbol from the alphabet (an error) or with a `?' (an erasure). The theory of such codes began with the seminal work of Shannon, \cite{shannon1948mathematical}, who studied random errors and erasures and the work of Hamming \cite{hamming1950error} who studied the adversarial model for errors and erasures. These models are mostly well understood, and today we know efficiently encodable and decodable codes that are optimal for Shannon's model of random errors. For adversarial errors, we have optimal codes over large alphabets and  good codes (codes of constant relative rate and relative distance) for every constant sized alphabet. 
	
	Another important model that has been considered ever since Shannon's work is that of \emph{synchronization} errors. These are errors that affect the length of the received word. The most common model for studying synchronization errors is the insertion-deletion model  (insdel for short): an insertion error is when a new symbol is inserted between two symbols of the transmitted word. A deletion is when a symbol is removed from the transmitted word. For example, over the binary alphabet, when $100110$ is transmitted, we may receive the word $1101100$, which is obtained from two insertions ($1$ at the beginning and $0$ at the end) and one deletion (one of the $0$'s at the beginning of the transmitted word).  
	Observe that compared to the more common error model, if an adversary wishes to \emph{change} a symbol, then the cost is that of two operations - first deleting the symbol and then inserting a new one instead.
	
	Insdel errors appear in diverse settings such as optical recording, semiconductor devices, integrated circuits, and synchronous digital communication networks. Another important example is the trace reconstruction problem, which has applications in computational biology and DNA-based storage systems \cite{bornholt2016dna,yazdi2017portable,heckel2019characterization}. See the surveys \cite{mitzenmacher2009survey,mercier2010survey} for a good picture of the problems and applications of error-correcting codes for the insdel model (insdel codes for short). 


	Unlike the (mostly) well-understood error models of Shannon and Hamming for random and worst-case errors, respectively, there are many simple questions regarding the insdel model that are widely open. In fact, even the basic question of what is the \emph{capacity} of the binary deletion channel with error probability $p$ (i.e., the channel in which each symbol of the transmitted message is deleted with probability $p$) is still  open. See the surveys \cite{mitzenmacher2009survey,mercier2010survey} as well as \cite{mitzenmacher2006simple,dalai2011new,cheraghchi2018capacity,con2020explicit} for an overview of results regarding the capacity of the binary deletion channel.
	Due to the importance of the insdel model and our lack of understanding of some basic problems concerning it, the model has attracted many researchers in recent years \cite{haeupler2017synchronization,brakensiek2017efficient,guruswami2017deletion,cheng2018deterministic,haeupler2019optimal,cheng2020efficient,guruswami2021explicit}. However, even the basic question of whether there exist good \emph{linear} codes, over small alphabets,\footnote{Over large alphabets this is easy to achieve, see discussion in \Cref{sec:proofidea}.} for the insdel model was unknown until the recent work of Cheng, Guruswami, Haeupler, and Li \cite{cheng2020efficient}. 
	
	Linear codes are desirable for many reasons: they have a compact representation (they are determined by their generating matrix), they are efficiently encodable, in some settings, we even have linear codes with  linear encoding and decoding time, and often they are simpler to analyze. In \cite{abdel2007linear}, it was shown that linear codes that can correct even one deletion, have a rate at most $1/2$, which is achieved by a trivial repetition code. More generally, in  \cite{cheng2020efficient}, it was shown that codes that can decode from a $\delta$ fraction of insdel errors cannot have rate larger than $(1-\delta)/2 + o(1)$ where the $o(1)$ term goes to zero as the block length tends to infinity. This bound is called the ``half-Singleton bound,'' and it is in sharp contrast to the fact that nonlinear insdel codes, or even \emph{affine} codes (codes  that form an affine space) can achieve rate close to $1$ while still being able to decode from a constant fraction of insdel errors \cite{cheng2020efficient}.	
	While previous work mistakenly claimed that there could be no good binary linear insdel codes (i.e., codes of a constant rate that can handle a constant fraction of insertions and deletions over the alphabet $\{0,1\}$),  \cite{cheng2020efficient} proved that there are binary linear codes of rate $1/2-\varepsilon$ that can decode from $\Omega\left(\varepsilon\log\varepsilon^{-1}\right)$, which is optimal up to the $\log\varepsilon^{-1}$ factor. They also proved that over fields of size $\exp(1/\varepsilon)$ there exist linear codes of rate  $(1-\delta)/2-\varepsilon$  that can decode a $\delta$ fraction of insdel errors. In addition, \cite{cheng2020efficient} gave explicit constructions of linear insdel codes that can decode from $\delta < 1/400$ fraction of insdel errors, however, the rate of the codes in their construction is extremely low.
	This led them to pose the problem of achieving better rate-(edit-)distance trade-offs.
	
\subsection{Basic definitions and notation}
	
	For an integer $k$, we denote $[k]=\{1,2,\ldots,k\}$. 
	Throughout this paper, $\log(x)$ refers to the base-$2$ logarithm. For a prime power $q$, we denote with $\F_q$ the field of size $q$.

	We denote the $i$th symbol of a string $s$ (or of a vector $v$) as  $s_i$ (equivalently $v_i$). Throughout this paper, we shall move freely between representations of vectors as strings and vice versa. Namely, we shall view each vector $v=(v_1, \ldots, v_n)\in \Fq^n$ also as a string by concatenating all the symbols of the vector into one string, i.e., $(v_1, \ldots, v_n) \leftrightarrow v_1 \circ v_2 \circ \ldots \circ v_n$. Thus, if we say that $s$ is a subsequence of some vector $v$, we mean that we view $v$ as a string and $s$ is a subsequence of that string. 	A	\emph{run} $r$ in a string $s$ is a single-symbol substring of $s$ such that the symbol before the run and the symbol after the run are different from the symbol of the run. 
	
	An error correcting code of block length $n$  over an alphabet $\Sigma$ is a subset $\cC\subseteq \Sigma^n$. The rate of $\cC$ is $\frac{\log|\cC|}{n\log|\Sigma|}$, which  captures the amount of information encoded in every symbol of a codeword.	
	A linear code over a field $\F$ is a linear subspace $\cC\subseteq \F^n$. The rate of a linear code  $\cC$ of block length $n$ is $\cR=\dim(\cC)/n$.  Every linear code of dimension $k$ can be  described as the image of a linear  map, which, abusing notation, we also denote with $\cC$, i.e., $\cC : \F^k \rightarrow \F^n$. Equivalently, a linear code $\cC$ can be defined by a \emph{parity check matrix} $H$ such that $x\in\cC$ if and only if $Hx=0$. The minimal distance of $\cC$ with respect to a metric $d(\cdot,\cdot)$ is defined as $\text{dist}_{\cC}:= \min_{v\neq u \in \cC}{d(v,u)}$. When $\cC\subseteq \F_q^n$ has dimension $k$ and minimal distance $d$ we say that it is an $[n,k,d]_q$ code, or simply an $[n,k]_q$ code.
	Naturally, we would like the rate to be as large as possible, but there is an inherent tension between the rate of the code and the minimal distance (or the number of errors that a code can decode from).
	In this work, we focus on codes against insertions and deletions. 
	\begin{defi}
		Let $s$ be a string over the alphabet $\Sigma$. The operation in which we remove a symbol from $s$ is called a \emph{deletion} and the operation in which we place a new symbol from $\Sigma$ between two consecutive symbols in $s$, in the beginning, or at the end of $s$, is called an \emph{insertion}. 
		
		A \emph{substring} of $s$ is a string obtained by taking consecutive symbols from $s$.
		A \emph{subsequence} of $s$ is a string obtained by removing some (possibly none) of the symbols in $s$. 
	\end{defi}

	The relevant metric for such codes is the edit-distance that we define next.
	
	\begin{defi}
	
		Let $s,s'$ be strings over the alphabet $\Sigma$. 
		A \emph{longest common subsequence} between $s$ and $s'$, is a subsequence $s_\textup{sub}$ of both $s$ and $s'$, of maximal length. We denote by $ \textup{LCS}(s,s')$ the length of a longest common subsequence.\footnote{Note that a longest common subsequence may not be unique as there can be a number of subsequences of maximal length. For example in the strings $s=(1,0)$ and $s'=(0,1)$.} 
		
		The \emph{edit distance} between $s$ and $s'$, denoted by $\ed(s,s')$, is the minimal number of insertions and deletions needed in order to turn $s$ into $s'$. One can verify that this measure  indeed defines a metric (distance function). 
	\end{defi}

	\begin{lemma}[See e.g.  Lemma 12.1 in \cite{crochemore2003jewels}]\label{lem:lcs}
		It holds that $\textup{ED}(s,s') = \left|s\right| + \left|s' \right| - 2 \textup{LCS}(s,s') $.
	\end{lemma}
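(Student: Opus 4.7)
The plan is to prove the equality by establishing the two inequalities $\ed(s,s') \le |s|+|s'|-2\textup{LCS}(s,s')$ and $\ed(s,s') \ge |s|+|s'|-2\textup{LCS}(s,s')$ separately.

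For the upper bound, I would fix a longest common subsequence $t$ of $s$ and $s'$ of length $L = \textup{LCS}(s,s')$ and exhibit an explicit sequence of edit operations taking $s$ to $s'$ of cost $|s|+|s'|-2L$. Since $t$ is a subsequence of $s$, there are $|s|-L$ positions of $s$ that are not used in $t$; deleting exactly those symbols one by one turns $s$ into $t$ using $|s|-L$ deletions. Symmetrically, $t$ is a subsequence of $s'$, so there are $|s'|-L$ symbols of $s'$ that must be added to $t$ at the appropriate positions; inserting them yields $s'$ using $|s'|-L$ insertions. Concatenating these two stages gives an insdel sequence of total length $|s|+|s'|-2L$, so $\ed(s,s') \le |s|+|s'|-2L$.

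For the lower bound, I would take an optimal insdel sequence of length $m=\ed(s,s')$, consisting of $d$ deletions and $i$ insertions (so $m=d+i$ and, by counting symbols, $|s|-d+i=|s'|$). The key step is to associate to this sequence a common subsequence of $s$ and $s'$ formed by the symbols of $s$ that are never deleted. Formally, I would track each position of $s$ through the sequence of operations: each undeleted position of $s$ corresponds to a unique position in $s'$ carrying the same symbol, and the induced map on positions is order-preserving (insertions and deletions shift positions but preserve the relative order of surviving symbols). Hence the $|s|-d$ undeleted symbols form, in their original order, a common subsequence of $s$ and $s'$, yielding $L \ge |s|-d$ and therefore $d \ge |s|-L$. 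Combined with the equality $i - d = |s'|-|s|$, this gives $i \ge |s'|-L$, and adding the two bounds produces $m = d+i \ge |s|+|s'|-2L$.

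The main obstacle is the bookkeeping in the lower bound: carefully defining the order-preserving correspondence between the undeleted positions of $s$ and their final positions in $s'$, so that one can legitimately conclude that the undeleted symbols constitute a common subsequence. I would handle this by induction on the length of the edit sequence, maintaining as an invariant that at every intermediate string the symbols inherited from $s$ retain their original relative order. Once this invariant is in place, both inequalities combine to give $\ed(s,s') = |s|+|s'|-2\,\textup{LCS}(s,s')$.
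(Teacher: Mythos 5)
Your proof is correct. The paper itself offers no proof of this lemma --- it is stated as a citation to Lemma 12.1 of \cite{crochemore2003jewels} --- so there is nothing internal to compare against; your two-inequality argument (delete the $|s|-L$ symbols of $s$ outside a fixed longest common subsequence, then insert the $|s'|-L$ missing symbols, for the upper bound; track the surviving symbols of $s$ through an optimal edit sequence to extract a common subsequence of length at least $|s|-d$, for the lower bound) is exactly the standard textbook proof, and the bookkeeping you flag (the order-preserving correspondence between undeleted positions and their images in $s'$) goes through by the induction you describe.
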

	
	\subsection{Previous results}\label{sec:prev-results}
	
	Codes against synchronization errors were studies since the 1950s. We refer the reader \cite{mitzenmacher2009survey,mercier2010survey} for a detailed account of early work.
	
	The field received a serious boost with the breakthrough result of Haeupler and Shahrasbi \cite{haeupler2017synchronization}.  They introduced the notion of \emph{synchronization strings} (\Cref{def:sync-str}) and used it to give optimal constructions of (non-linear) insdel codes over fixed alphabets.  Specifically, for every $\varepsilon>0$ and $\delta \in (0,1)$ they constructed a code of rate $1-\delta - \varepsilon$ that can efficiently correct a $\delta$ fraction of insdel errors, over an alphabet of size $\left| \Sigma \right| = O_{\varepsilon}(1)$ (\Cref{thm:hs-code}).
	
	Linear codes against worst-case insdel errors were recently studied by
	Cheng, Guruswami, Haeupler, and Li \cite{cheng2020efficient}. Correcting an error in a preceding work, they proved that there are good linear codes against insdel errors. 
	\begin{thm}[Theorem 4.2 in \cite{cheng2020efficient}] \label{thm:random-code}
		For any $\delta > 0$ and prime power $q$, there exists a family of linear codes over $\Fq$ that can correct up to $\delta n$ insertions and deletions, with rate $(1-\delta)/2 - h(\delta)/\log_2 (q)$.
	\end{thm}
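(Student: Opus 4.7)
The plan is a probabilistic existence proof via random linear codes, using an edit-distance ball-size bound in the spirit of Gilbert--Varshamov.

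First I would reduce insdel decoding to a minimum-edit-distance condition: a linear code $\cC$ can decode from $\delta n$ insertions and deletions iff every two distinct codewords satisfy $\ed(c,c') > 2\delta n$, which by \Cref{lem:lcs} is the same as $\textup{LCS}(c,c') < (1-\delta) n$.

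Next I establish an edit-distance ball-size bound: for any fixed $c \in \Fq^n$, the number of $c' \in \Fq^n$ of the same length $n$ with $\ed(c,c') \leq 2\delta n$ is at most
\[
\sum_{d=0}^{\delta n} \binom{n}{d}^2 q^d \;\leq\; \poly(n) \cdot 2^{2 n h(\delta)}\, q^{\delta n}.
\]
This follows because any such $c'$ arises from $c$ by some $d \leq \delta n$ deletions followed by $d$ insertions (both counts equal since $|c'|=|c|=n$): the deletion positions contribute $\binom{n}{d}$, and the insertion positions and values together contribute $\binom{n}{d} q^d$.

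I then sample a random linear code via a uniformly random parity-check matrix $H \in \Fq^{(n-k)\times n}$ and set $\cC = \ker H$. For any nonzero $c$, $\Pr[c \in \cC] = q^{-(n-k)}$, and for linearly independent $c,c'$, $\Pr[c,c' \in \cC] = q^{-2(n-k)}$. The expected number of ordered bad pairs $(c,c') \in \cC^2$ with $c \neq c'$ and $\ed(c,c') \leq 2\delta n$ splits into linearly independent and linearly dependent contributions. The independent contribution is at most
\[
q^n \cdot \poly(n) \cdot 2^{2nh(\delta)} q^{\delta n} \cdot q^{-2(n-k)} \;=\; \poly(n) \cdot 2^{2nh(\delta)}\, q^{2k - (1-\delta)n},
\]
which tends to $0$ whenever $k/n < (1-\delta)/2 - h(\delta)/\log_2 q$. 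Taking $k = \lceil ((1-\delta)/2 - h(\delta)/\log_2 q - \varepsilon)\, n \rceil$ and letting $n \to \infty$ then yields, by the probabilistic method, a family of linear codes of the claimed rate.

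The main obstacle is correctly handling the linearly dependent codeword pairs: in a random linear code, $c \in \cC$ automatically implies $\alpha c \in \cC$ for every $\alpha \in \Fq$, so pairs $(c,\alpha c)$ with $\alpha \neq 1$ are not independent and cost only a single factor of $q^{-(n-k)}$. One must therefore separately bound $|\{c : \ed(c,\alpha c) \leq 2\delta n\}|$ for each $\alpha$. This is handled by the same alignment counting: an LCS alignment of length $(1-\delta)n$ between $c$ and $\alpha c$ imposes $(1-\delta)n$ linear equations $c_i = \alpha\, c_{\phi(i)}$ on $c$, leaving at most $q^{\delta n}$ solutions per alignment, and there are at most $\binom{n}{\delta n}^2 \leq 2^{2nh(\delta)}$ alignments; the resulting bound is weaker than the independent-pair one and hence does not affect the final rate.
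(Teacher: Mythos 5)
This theorem is imported from \cite{cheng2020efficient} and the paper you were given never proves it, so there is no in-paper argument to compare against; but your proposal is a correct reconstruction and follows essentially the same route as the cited source: a first-moment argument over random linear codes with a union bound over LCS alignments (the source's key tool, its Claim~4.1, is exactly an alignment-wise probability bound of this kind, applied with a random generator matrix rather than a random parity-check matrix). Your identification of the scalar-multiple pairs $(c,\alpha c)$ as the delicate point is right, and this is precisely what Claim~4.1 of \cite{cheng2020efficient} is designed to absorb uniformly. The one step you assert without justification is that the $\lceil(1-\delta)n\rceil$ equations $c_i=\alpha c_{\phi(i)}$ have full rank, leaving only $q^{\delta n}$ solutions: a priori such a system can lose rank on cycles of the map $\phi$. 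The fix is short -- since $\phi$ is an increasing injection, its functional graph is a disjoint union of simple paths and fixed points (a cycle of length $\geq 2$ would force $\phi$ to decrease at the cycle's minimum), path equations are independent because each introduces a fresh variable, and a fixed point gives $(1-\alpha)c_i=0$, which is nontrivial for $\alpha\neq 1$ -- but it should be stated. With that addition, and the routine observation that the dependent-pair threshold is exactly twice the independent-pair one (so it is never binding when the claimed rate is positive), the argument is complete.
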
 
	The proof of \Cref{thm:random-code} uses the probabilistic method, showing that, with high probability, a random linear map generates such code. Complementing their result, they proved that their construction is  almost tight. Specifically, they provided the following upper bound, which they call ``half-Singleton bound,'' that holds over any field.
	\begin{thm}[Half-Singleton bound: Corollary 5.1 in \cite{cheng2020efficient}]
		Every linear insdel code which is capable of correcting a $\delta$ fraction of deletions has rate at most $(1-\delta)/2 + o(1)$.
	\end{thm}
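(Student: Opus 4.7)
The plan is to translate the deletion-correction hypothesis into a condition on the longest common subsequence of pairs of codewords, and then exhibit --- via the linear structure of $\cC$ --- a pair of distinct codewords forced to share a long common subsequence, contradicting the hypothesis unless $k$ is small.

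First, I will observe that a code capable of correcting $\delta n$ deletions must satisfy $\textup{LCS}(c_1, c_2) \leq (1-\delta)n - 1$ for every pair of distinct codewords $c_1 \neq c_2 \in \cC$: otherwise, some length-$(1-\delta)n$ common subsequence of $c_1$ and $c_2$ could be reached by deleting $\delta n$ symbols from each, producing an ambiguity for any decoder. This is the only property of the deletion-correction hypothesis I will use.

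Second, I will use linearity to extract a non-zero codeword with a long run of zeros on one end. Row-reducing the generator matrix of $\cC$ to reduced row echelon form yields pivot positions $i_1 < i_2 < \cdots < i_k$ with $i_j \geq j$, and the final basis row $g_k \in \cC$ vanishes on positions $[1, i_k - 1]$, giving at least $k-1$ leading zeros. A symmetric row-reduction applied to the string-reversal of $\cC$ yields a second codeword $h \in \cC$ with at least $k-1$ trailing zeros (i.e., supported on positions $[1, j_k]$ with $j_k \leq n - k + 1$).

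Third, I will combine these two structural codewords (or a suitable linear combination built from them together with $0$) to produce a distinct pair $(c_1, c_2)$ with $\textup{LCS}(c_1, c_2) \geq 2(k-1) - o(n)$. The point is that the disjointness of the leading-zero region of $g_k$ and the trailing-zero region of $h$ should allow their zero-matchings to be concatenated into a single common subsequence of length close to $2k$. Combined with the first step, this forces $2(k-1) - o(n) \leq (1-\delta)n - 1$, and rearranging yields $R = k/n \leq (1-\delta)/2 + o(1)$.

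\textbf{Main obstacle.} The third step is the crux. A direct comparison of $g_k$ with the zero codeword gives only $\textup{LCS} \geq k-1$, which reproduces the ordinary Singleton bound $R \leq 1 - \delta$; likewise matching common zero symbols between $g_k$ and $h$ as an ``all-zeros'' subsequence produces only $\min(\#\textup{zeros}(g_k), \#\textup{zeros}(h)) \geq k-1$ matches, again Singleton-strength. Extracting the factor-of-two improvement requires a subtler construction that leverages the positional disjointness of the zero regions $[1, i_k - 1]$ and $[j_k + 1, n]$ --- for instance, by taking a carefully chosen linear combination $\alpha g_k + \beta h$ so as to create a single codeword inheriting zeros from both ends, or by exploiting agreements at middle positions forced by the staircase structure of RREF. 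Making this interleaving rigorous is the main technical obstacle, and the $o(1)$ slack in the statement hints that the argument is asymptotic rather than tight at every $n$.
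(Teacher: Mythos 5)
This theorem is cited from \cite{cheng2020efficient} rather than proved in the paper under review, so the comparison is against Cheng, Guruswami, Haeupler, and Li's argument. Your Step 1 (the LCS criterion for $\delta n$-deletion-correction) and Step 2 (extracting a codeword $g_k$ with $\geq k-1$ leading zeros and a codeword $h$ with $\geq k-1$ trailing zeros via RREF of $G$ and of the reversal) are both correct. But Step 3 is a genuine gap, and I do not think it can be closed along the lines you propose. The trouble is that the only structure you can certify in $g_k$ and $h$ is the zero runs at opposite ends, and the greedy matching of those zeros tops out at $k-1$: a length-$(2k-2)$ common subsequence would have to either (a) match all $k-1$ zeros of $g_k$'s prefix against $h$'s suffix zeros \emph{and} match another $k-1$ symbols from the ``middles,'' which are unconstrained and can be chosen (e.g., strictly increasing in $g_k$ and strictly decreasing in $h$) to have no nonconstant common subsequence at all; or (b) find extra zeros in the middles, which RREF does not provide. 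Replacing one of the two codewords with $0$, or taking $\alpha g_k + \beta h$, runs into the same wall: those combinations only ever let you count zeros of a single codeword, which reproduces $k-1$ and hence only the ordinary Singleton bound $R \leq 1-\delta$.

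The actual Cheng--Guruswami--Haeupler--Li argument gets the factor of two by a genuinely different mechanism: a \emph{shift-matching} constraint rather than a zero-run extraction. Let $f_1,\dots,f_n$ be the coordinate functionals on $\cC$, so $\dim \operatorname{span}\{f_i\} = k$. Since the image of the hyperplane $\{\sum a_i = 0\}$ under $\F_q^n \twoheadrightarrow \cC^{*}$ has dimension at least $k-1$, one may pick (assuming $n \geq 2k-1$, which is forced once the code corrects a single deletion) indices $p_1 < \cdots < p_{2k-2} \leq n-1$ whose ``difference functionals'' $f_{p_t} - f_{p_t+1}$ have rank at least $k-1$ on $\cC$. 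Imposing the $2k-2$ linear conditions $c_{p_t} = c'_{p_t+1}$ on a pair $(c,c') \in \cC \times \cC$ leaves a kernel of dimension at least $2$, while the diagonal pairs $(c,c)$ in that kernel form a space of dimension $k - (k-1) \leq 1$ precisely because the difference functionals have high rank. Hence some solution has $c \neq c'$, and the conditions exhibit a common subsequence of $c$ at positions $p_1 < \cdots < p_{2k-2}$ and of $c'$ at positions $p_1+1 < \cdots < p_{2k-2}+1$, i.e., $\textup{LCS}(c,c') \geq 2k-2$. This is the piece that your proposal is missing: the constraints must be chosen so that the diagonal $\{c = c'\}$ is forced to be low-dimensional, and zero-run constraints on two separate codewords do not accomplish this. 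Your Step 2, while true, is not on the critical path of the proof.
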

	
    In a recent work, Guruswami, He, and, Li  \cite{guruswami2021zero} strengthened this result for \emph{binary} codes and showed that there exists an absolute constant $\delta$ such that any binary code $\cC \subseteq \{0,1\}^n$ (not only linear) that can decode from $(1/2 - \delta)$ fraction of deletions must satisfy $|\cC| \leq 2^{\text{poly} \log n}$. In particular, we cannot hope to decode a fraction of insdel errors arbitrarily close to $1/2$ with codes of positive rate.
    
	As we already mentioned, \cite{cheng2020efficient} constructed explicit linear codes for insdel errors. Their codes have rate $\mathcal{R} < 2^{-80}$ \cite{GuruswamiKuan21}, a linear time encoding algorithm and an $O(n^4)$ time algorithm for decoding a $\delta<1/400$ fraction of insdel errors \cite[Theorem 1.4, Corollary 7.1]{cheng2020efficient}. 
	They left the question of constructing efficient codes with better rates open.

\subsection{Our results}
	In this paper, we improve the results presented in \cite{cheng2020efficient}. We give explicit constructions of codes over small fields that are efficient (namely, have polynomial-time encoding and decoding algorithms) and almost attain the half-singleton bound. Specifically, 

	\begin{restatable}{thm}{FullLinear}
		\label{thm:full-linear}
		For every small enough constant $\varepsilon > 0$, $\delta \in (0,1/4)$  and $q=\poly(1/\varepsilon)$, there is an explicit construction of a linear code over $\F_q$ of rate  $\mathcal{R} > (1 - 4\delta)/8 - \varepsilon$ that can correct from a $\delta$ fraction of adversarial  insdel errors. Furthermore,  the running time  of the decoding algorithm is $O(n^3)$. 
	\end{restatable}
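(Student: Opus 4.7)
The plan is to build the code by concatenating an inner code with an outer linear code, separated by all-zero buffers to preserve full $\F_q$-linearity. For the inner code $\cCin$, I would use the paper's companion $\F_q$-linear-over-$\F_{q^2}$ code of rate $(1-\delin)/4-\varepsilon'$ that efficiently corrects a $\delin$ fraction of insdel errors. After fixing an $\F_q$-basis $\{1,\gamma\}$ of $\F_{q^2}$, every $\F_{q^2}$-symbol $z=z^{(0)}+\gamma z^{(1)}$ is rewritten as the pair $(z^{(0)},z^{(1)})\in\F_q^2$, turning $\cCin$ into an $\F_q$-linear code of length $2n_{\textup{in}}$ and the same rate. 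The outer code $\cCout$ is a high-rate linear code (e.g., a Reed--Solomon code over $\F_{q^M}$) of rate close to $1$ correcting a small fraction $\delout$ of erasures. The full codeword has the form
\[
\inbuf \;\|\; B_1 \;\|\; \inbuf \;\|\; B_2 \;\|\; \cdots \;\|\; \inbuf \;\|\; B_K \;\|\; \inbuf,
\]
where each $B_i\in\F_q^{2n_{\textup{in}}}$ is the $\F_q$-expansion of the inner encoding of the $i$-th outer symbol and $\inbuf$ is an all-zero block of constant length $b=b(\varepsilon)$. Linearity is immediate because the buffers are zero and both $\cCin$ and $\cCout$ are linear.

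First I would analyze how a $\delta$-fraction insdel attack on the codeword of length $N$ distributes across the structure. A pigeonhole-style argument shows that only an $O(\delta N/b)$ fraction of buffers can be so corrupted that they fail to appear as a long run of zeros in the received word, so the decoder can locate all but $\delout K$ of the block boundaries in $O(n)$ time by scanning for zero-runs. For each identified block the decoder runs the $\cCin$-decoder; any block with too many insdel errors or with misidentified boundaries is erased at the outer level, and $\cCout$ then corrects the erasures. The parameters are chosen with $\delin=\Theta(\delta)$ large enough (specifically $\delin\approx 4\delta$) to absorb both the factor-of-two loss from reinterpreting $\F_{q^2}$-symbols over $\F_q$ (since a single $\F_q$-deletion can induce two $\F_{q^2}$-level edit operations), and the boundary-ambiguity overhead from zero-buffer synchronization. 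The constant buffer overhead then halves the usable rate, yielding $\tfrac{1}{2}\cdot(1-\delin)/4 = (1-4\delta)/8-\varepsilon$ once all the small $\varepsilon$'s are combined.

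The decoding algorithm runs in $O(n^3)$: $O(n)$ for buffer scanning, $O(n^2)$ for the inner $\cCin$-decoder on each of the $O(n)$ inner blocks (via the standard edit-distance dynamic program), and $O(n^2)$ for the outer erasure decoder. The main obstacle is achieving a tight concatenation analysis while restricted to \emph{zero} buffers: the classical synchronization strings of Haeupler--Shahrasbi are fixed non-zero patterns and would break $\F_q$-linearity, so the decoder must instead infer block boundaries purely from zero-runs, which is considerably weaker since the adversary can insert zeros to create phantom buffers or delete zeros to merge adjacent blocks. The key combinatorial claim to establish is that for a suitable constant buffer length $b=b(\varepsilon)$, the total number of inner blocks that are either mis-bounded or carry more than a $\delin$-fraction of insdel errors is at most $\delout K$; this bound is what ultimately forces the error-parameter blowup by a factor of $4$ and the rate loss by a factor of $2$.
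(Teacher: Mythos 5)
Your proposal takes a genuinely different route from the paper -- a code concatenation with an outer Reed--Solomon code and the half-linear code as inner code -- whereas the paper's construction of \Cref{thm:full-linear} is not a concatenation at all: it takes a single AG code $\cC$ with distance $\delta_{\cC}=(1+4\delta+13\varepsilon)/2$ and a nonzero $\varepsilon^2$-synchronization string $S$, encodes $c\mapsto(c_1,S_1c_1,0,0,c_2,S_2c_2,0,0,\ldots)$, and decodes by splitting the received word at zero-runs, keeping only nonzero blocks of length exactly $2$, mapping each such block $(a,b)\mapsto(b/a,a)$ to recover the index $S_i$ from the ratio, and handing the resulting list to the Haeupler--Shahrasbi decoder. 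Unfortunately, your version has concrete gaps that I do not believe can be closed as stated.

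First, the parameters are inconsistent with adversarial error concentration. The adversary's budget is $\delta N$ with $N=K(2n_{\textup{in}}+b)$; killing one inner block costs it only about $\delin\cdot 2n_{\textup{in}}$ edits, so with $\delin\approx 4\delta$ it can place more than a $\delin$-fraction of errors into at least a $\delta(2n_{\textup{in}}+b)/(2\delin n_{\textup{in}})\geq 1/4$ fraction of the inner blocks. Your key combinatorial claim -- that at most $\delout K$ blocks fail, for small $\delout$ -- is therefore false by a simple averaging argument, and an outer code of rate close to $1$ cannot absorb a constant fraction of erasures (let alone errors, which an overwhelmed inner insdel decoder may output instead of $\perp$). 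Forcing the outer rate below $1/2$ to fix this destroys the claimed rate $(1-4\delta)/8-\varepsilon$. Second, the step ``turning $\cCin$ into an $\F_q$-linear code of length $2n_{\textup{in}}$ and the same rate'' with only ``a factor-of-two loss'' in edit operations is wrong: a single $\F_q$-symbol deletion inside a flattened block shifts the pairing of all subsequent symbols, so the reassembled $\Fqm{2}$-string can be at edit distance $\Omega(n_{\textup{in}})$ from the true one; nor does \Cref{thm:half-linear} give any lower bound on the $\F_q$-level edit distance of the flattened code. This re-synchronization problem is exactly what the paper's per-pair two-zero buffers and the ratio trick are designed to solve. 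Third, your zero-run boundary detection has no protection against long zero runs occurring \emph{inside} inner codewords (a linear inner code certainly has such codewords); the paper's binary construction in \Cref{sec:binary-del} has to build a special inner code (\Cref{prop:inner-code}, property 2) with no long zero runs precisely to make buffer identification sound, and you provide no analogous mechanism.
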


	By relaxing the linearity requirement, we construct ``half-linear'' codes. We say that a code is half-linear when it is defined over the field $\Fqm{2}$ and is linear over $\Fq$. 
	The half-linear codes that we construct can decode from any $\delta$ fraction of insdel errors, and their rate is close to $(1-\delta)/4$. 
	
	\begin{restatable}{thm}{HalfLinear} 
		\label{thm:half-linear}
		For every small enough constant $\varepsilon > 0$, $\delta \in (0,1)$, and $q=\poly(1/\varepsilon)$ there is an explicit construction of a code over $\F_{q^2}$, which is linear over the subfield $\F_q$, that has rate $\mathcal{R} > (1 - \delta)/4 - \varepsilon$ and can correct from $\delta$ fraction of insdel errors. Furthermore, the running time  of the decoding algorithm is $O(n^3)$. 
	\end{restatable}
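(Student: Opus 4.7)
My plan is to instantiate the Haeupler--Shahrasbi synchronization-string paradigm inside $\Fqm{2}$, exploiting that linearity is only required over the subfield $\Fq$. The extra $\Fq$-dimension afforded by the extension $\Fqm{2}/\Fq$ will be used to carry sync information, letting us ``index'' a linear $\Fq$-code (e.g.\ Reed--Solomon) by a fixed sync string while preserving $\Fq$-linearity through a careful choice of the sync layer.

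Fix $\alpha \in \Fqm{2} \setminus \Fq$, so every element of $\Fqm{2}$ decomposes uniquely as $a + \alpha b$ with $a, b \in \Fq$. Invoke the explicit Haeupler--Shahrasbi construction of a $(\delta, \varepsilon/2)$-indexing synchronization string $S = (S_1, \ldots, S_n) \in \Sigma_S^n$ with $|\Sigma_S| = O_\varepsilon(1) \le q$, embedding $\Sigma_S \hookrightarrow \Fq$. Take the outer code $\cCout \subseteq \Fq^n$ to be an $\Fq$-linear Reed--Solomon code of dimension $k_0 \approx n(1-\delta)/2 - n\varepsilon$, whose error-and-erasure decoder handles the half-errors produced after realignment of a $\delta n$-insdel-corrupted transmission. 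The half-linear code is then
\[
\cC \;=\; \cCout \;+\; \alpha\cdot \mathcal{S} \;\subseteq\; \Fqm{2}^n,
\]
where $\mathcal{S} \subseteq \Fq^n$ is an $\Fq$-linear subspace whose nonzero elements are all valid sync strings (possibly after relabeling the sync alphabet). Since $\Fq^n \cap \alpha\Fq^n = \{0\}$, this is a direct sum and the $\Fq$-dimension is $k_0 + \dim_{\Fq}\mathcal{S}$, giving rate $(k_0+\dim_{\Fq}\mathcal{S})/(2n)$ over $\Fqm{2}$.

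Decoding a received word $y \in \Fqm{2}^{n'}$ proceeds in three phases: (i) split coordinate-wise as $y_i = a_i + \alpha b_i$; (ii) apply the Haeupler--Shahrasbi realignment algorithm to $b$, an insdel-corrupted element of $\mathcal{S}$, which runs in $O(n^3)$ by the standard dynamic-programming alignment; (iii) pass the resulting aligned word in $\Fq^n$, now carrying at most $\delta n$ half-errors, to an efficient Reed--Solomon error-and-erasure decoder in time $O(n^2)$. The overall running time is $O(n^3)$, as required.

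The main obstacle is constructing $\mathcal{S}$. Taking $\mathcal{S} = \text{span}_{\Fq}(S)$ is attractive and nearly optimal: scaling $S$ by any $\mu \in \Fq^\times$ merely relabels the sync alphabet and preserves the indexing property, so every nonzero element of $\mathcal{S}$ is automatically a sync string. The catch is the degenerate fiber $\mu = 0$, where the $\alpha$-component vanishes, the codeword lies entirely in $\cCout$, and no sync information is transmitted --- two RS codewords can then have small edit distance and become indistinguishable. One resolves this either by endowing $\cCout$ with insdel-correcting structure via \Cref{thm:full-linear} for the $\mu=0$ fiber, or by enlarging $\mathcal{S}$ to higher $\Fq$-dimension using a probabilistic/greedy argument (exponentially many sync strings exist, so $\Fq$-linear subspaces consisting entirely of sync strings of appreciable dimension exist). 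Balancing $\dim_{\Fq}\cCout$ against $\dim_{\Fq}\mathcal{S}$ yields rate $(1-\delta)/4 - \varepsilon$ and accounts for the factor-of-two gap from the half-Singleton bound $(1-\delta)/2$.
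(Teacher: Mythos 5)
Your proposed construction is genuinely different from the paper's, and it contains a gap that is not repaired by the remedies you sketch. The paper's Construction~2.3 couples the data and the synchronization information \emph{multiplicatively}: the $i$th coordinate $c_i$ of an AG codeword is mapped to $(c_i, S_i\cdot c_i)\in\Fq^2\cong\Fqm{2}$, where $S$ is a fixed sync string with all $S_i\neq 0$, and this map is $\Fq$-linear. You instead take a direct sum $\cCout + \alpha\mathcal{S}$ in which the sync layer lives in a complementary $\Fq$-subspace, independent of the data. This is exactly where the construction breaks: every $\Fq$-linear subspace $\mathcal{S}$ contains the zero vector, so the codewords $\{c+\alpha\cdot 0 : c\in\cCout\}$ form a large subcode carrying no synchronization information whatsoever, and two vanilla AG/RS codewords can have tiny edit distance. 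Enlarging $\dim_{\Fq}\mathcal{S}$ does not help, since this degenerate fiber is always the full $\cCout$. Replacing $\cCout$ by the fully-linear insdel code of \Cref{thm:full-linear} caps its rate at roughly $(1-4\delta)/8$, which destroys the target $(1-\delta)/4$. The tension here is real: linearity forces $0\in\mathcal{S}$, and $0\in\mathcal{S}$ kills the sync for a whole fiber. The paper's multiplication trick is precisely what resolves it: the sync tag $S_i c_i$ vanishes only where the data $c_i$ itself vanishes, and the minimum distance $\delta_{\cC}$ of the base code bounds the number of such coordinates by $(1-\delta_{\cC})n$, so the lost sync information is absorbed into the edit-distance budget as extra apparent deletions.

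Two secondary problems. Your decoder realigns the $\alpha$-component $b$, which is a corruption of $\mu S$ for an unknown message-dependent scalar $\mu$; the Haeupler--Shahrasbi realignment of \Cref{thm:hs-code} needs a fixed reference string, so you would have to search over $\mu\in\Fq$ (and handle $\mu=0$ separately), which your decoder does not do. And Reed--Solomon over $\Fq$ has block length at most $q=\poly(1/\varepsilon)$, so it cannot give a family of codes of unbounded length; the paper uses AG codes (\Cref{thm:tvz-ag-code}) precisely to obtain long codes with good rate--distance tradeoff over a fixed small alphabet.
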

	
	Using this construction, we obtain linear binary codes against insdel errors.

	\begin{restatable}{thm}{binLinear} 
		\label{thm:bin-linear}
		There exists an explicit linear binary code that can correct from $\delta<\ourdel$ fraction of worst-case deletions in $O(n^3)$ time and has the rate $\mathcal{R} = \ourate$.
	\end{restatable}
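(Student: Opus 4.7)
The plan is to obtain \Cref{thm:bin-linear} by binary concatenation of the half-linear code from \Cref{thm:half-linear} with a suitable $\F_2$-linear inner code, separated by long all-zero buffers. Concretely, I would take $q$ to be a power of $2$ so that $\F_{q^2} \supseteq \F_q \supseteq \F_2$; then the outer code $\cCout$ from \Cref{thm:half-linear}, which is $\F_q$-linear, is in particular $\F_2$-linear once we fix an $\F_2$-basis of $\F_{q^2}$. The inner code $\cCin$ would be an explicit $\F_2$-linear $[N,2\log_2 q]_2$ code chosen so that (i) every nonzero codeword has no run of repeated bits of length greater than some threshold $L$, and (ii) it can be decoded from a small constant fraction of insdel errors by brute-force edit-distance minimization (feasible since $|\cCin|$ is polynomial in $1/\varepsilon$). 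Between successive inner codewords I insert a buffer $\inbuf$ of $B \gg L$ zeros. The full encoder
\[
(x_1,\dots,x_n)\;\longmapsto\;\cCin(x_1)\circ \inbuf \circ \cCin(x_2)\circ \inbuf \circ \cdots \circ \inbuf \circ \cCin(x_n),\qquad (x_1,\dots,x_n)=\cCout(m),
\]
is $\F_2$-linear because concatenation of $\F_2$-linear maps is $\F_2$-linear.

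Decoding would proceed in three stages. First, scan the received word for maximal runs of $0$'s; any run substantially longer than $L$ (and shorter than the buffer cannot survive with) is declared a buffer. Because the adversary has a $\delta$ budget of bit-level insdel errors and the interior of each inner codeword has runs of length $\le L$, for $B$ large enough all but a small fraction of buffers are recovered, yielding an approximate partition of the received word into $n$ "blocks." Second, each block is decoded by edit-distance minimization against $\cCin$: blocks close to some codeword produce an outer symbol, and blocks that are nearly all zero are decoded to $0 \in \F_{q^2}$ (which is consistent with the linearity convention since $\cCin(0)$ is the unique codeword with an unbounded run). Third, feed the resulting length-$n$ symbol sequence to the outer decoder of \Cref{thm:half-linear}, which corrects a $\delout$ fraction of \emph{symbol-level} insdel errors.

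The main obstacle is the quantitative accounting: one must show that a $\delta$ fraction of bit-level insdel errors translates into at most a $\delout$ fraction of symbol-level insdel errors in the outer word. Each bit error contributes in bounded ways — corrupting at most a constant number of blocks, erasing or duplicating at most one buffer, or shifting at most one block boundary — so the induced symbol-level error fraction is of the form $c\cdot\delta\cdot(N+B)/N'$ for an explicit constant $c$ and effective per-symbol length. Balancing $L, N, B$ as polynomial functions of $1/\varepsilon$, and optimizing the resulting rate
\[
\mathcal{R} \;=\; \frac{2\log_2 q}{N+B}\cdot \rRout,
\]
against the constraint $\delout \ge c\delta(N+B)/N'$, yields the explicit threshold $\delta<\ourdel$ and rate $\ourate$. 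The only subtle point is treating the all-zero inner codeword, which violates the run-length property; because it is the \emph{only} such codeword and corresponds to the outer symbol $0$, it can be handled uniformly at decoding by the "all zeros $\Rightarrow$ symbol $0$" rule without breaking $\F_2$-linearity of the encoder. The cubic decoding time follows from the $O(n^3)$ outer decoder together with the $O(n)$ buffer scan and per-block constant-time inner decoding.
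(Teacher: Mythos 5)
Your architecture is the same as the paper's: concatenate the half-linear code of \Cref{thm:half-linear} (over $q$ a power of two so the whole map is $\F_2$-linear) with a short run-length-constrained binary linear inner code, separate inner codewords by zero buffers, and decode by buffer identification, brute-force inner decoding, and the outer insdel decoder; the zero inner codeword is handled exactly as you say, by letting the outer decoder ignore zero symbols. The structural differences (one buffer per outer symbol versus the paper's two-level inner/outer buffers around the two halves $\cCin(\sigma_i)$ and $\cCin(S_i\sigma_i)$) are cosmetic. The gap is in the two places where all the work actually lives.

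First, your inner-code properties are too weak. ``No run of repeated bits longer than $L$'' does not stop the adversary from cheaply manufacturing a fake buffer: a nonzero codeword could alternate runs of zeros of length close to $L$ with single ones, so that deleting only $O(\delin m/L)$ ones exposes a run of $4\delin m$ zeros. The paper needs the stronger density property that \emph{every} window of $\delin m$ consecutive positions of every nonzero codeword contains at least $\rho m+1$ ones (\Cref{prop:inner-code}), which is what makes fake buffers and the bleeding of codeword zeros into adjacent buffers cost $\Omega(\rho m)$ deletions. Likewise, insdel-decodability of $\cCin$ must hold for long \emph{substrings} of distinct codewords, not just for full codewords, because the block handed to the inner decoder has lost an unknown prefix and suffix of zeros (and boundary ones) to the neighbouring buffers; this is exactly item~\eqref{prop:inner-1} of \Cref{prop:inner-code} and \Cref{clm:inner-code-dec}. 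Second, the claim that ``each bit error corrupts at most a constant number of blocks,'' so the symbol-level error fraction is $c\delta(N+B)/N'$, is asserted rather than proved, and it is false as a local statement: the adversary concentrates deletions to merge many blocks, kill many consecutive buffers, or convert a block into a \emph{fake} outer symbol (which costs $2$ in outer edit distance, not $1$). The paper's proof of \Cref{prop:bin-decod} is a case analysis over these attacks showing that the \emph{amortized} cost of increasing $\ed(\sigma^0,L)$ by one is at least $\rho m+1$ deletions; without that accounting (and without the stronger inner-code properties that feed into it) the constants $\ourdel$ and $\ourate$ cannot be extracted. Finally, note that the theorem and the paper's decoder handle only deletions (the subsequence-test inner decoder of \Cref{rem:unique-dec} never errs precisely because there are no insertions); your bit-level insdel version would need a different, and harder, argument.
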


        While the algorithm in \Cref{thm:bin-linear} is only guaranteed to decode from deletions, we note that, information-theoretically, the code can also decode from $\ourdel$ fraction of worst-case insdel errors, as the claim implies a lower bound on the edit distance between any two codewords.

		Theorems~\ref{thm:full-linear} and \ref{thm:bin-linear} improve upon the (explicit) constructions of linear codes given in \cite{cheng2020efficient}, which can handle a fraction $\delta<1/400$ of insdel errors and whose rate is $<2^{-80}$.
		We note, however, that \Cref{thm:bin-linear} only gives an efficient decoder against deletions, whereas the algorithm in \cite{cheng2020efficient} decodes from both insertions and deletions.

\subsection{Proof idea}\label{sec:proofidea}
	
	We first observe that it is easy to construct codes against deletions from any code that can correct erasures: simply add indices to the coordinates of each codeword. Specifically, if $\cC$ is a code that can correct from $e$ erasures,  then we can consider the following code
	\[
	\cC ' = \left \lbrace ((1, c_1), \ldots, (n,c_n)) \mid c \in \cC \right \rbrace \;.
	\]
	It is easy to see  that this code can decode from $e$ adversarial deletions - the missing indices indicate the location of the deletions, and therefore  we can treat them as erasures.
	With a slightly more advanced algorithm, this code can also decode from adversarial insertions (for this to work, we need a code that can decode from errors as well).
		This construction has two problems. The first is that it is not linear. The second is that it requires an alphabet of size  $\Omega(n)$. 
	
	The problem of linearity can be solved as follows. Assume  $\cC \subseteq \mathbb{F}_q^n$  is linear. To add indices while preserving linearity we replace $(i,c_i)$ with  $(c_i, i\cdot c_i)$. Observe that the resulting code is linear over $\mathbb{F}_q$, but symbols  of the codeword are in $\mathbb{F}_{q^2}$.  We shall call such codes half-linear codes. To make the code fully linear, we replace each symbol $(c_i, i\cdot c_i)$ with two symbols, $c_i$ and $i\cdot c_i$. The problem is that now, after adversarial deletions, it is unclear which indices ``survived'' and which were deleted or corrupted. To overcome this difficulty, we add small ``buffers'' of zeros between the different indices. That is, the new codeword is $(c_1,1\cdot c_1, 0,0, c_2,2\cdot c_2,0,0,c_3,\ldots)$. 
	Note that we still need a large alphabet to have $n$ different field elements that can serve as indices. 
	
		To reduce the alphabet size, we use synchronization strings instead of field elements for the indices. Synchronization strings were defined in the breakthrough work of Haeupler and Shahrasbi \cite{haeupler2017synchronization}.  
	
	\begin{defi}\label{def:sync-str}
		A string $S\in \Sigma^n$ is called an $\varepsilon$-synchronization string if for every $1\leq i<j<k\leq n+1$ it holds that $\ed(S[i,j), S[j,k)]) > (1-\varepsilon)\cdot (k-i)$, where $S[i,j)$ denotes the string $S_i\circ S_{i+1} \circ \cdots \circ S_{j-1}$ and $S_i$ is the $i$th coordinate of $S$.
	\end{defi}
	
	Haeupler and Shahrasbi proved the existence of such strings and gave a polynomial-time randomized algorithm for constructing them. An explicit construction, with improved alphabet size, was given in \cite{DBLP:conf/soda/ChengHLSW19}.
		\begin{thm}[Theorem 1.2 in \cite{DBLP:conf/soda/ChengHLSW19}] \label{thm:det-sync-str}
		For every $n \in \mathbb{N}$ and for every $\varepsilon\in (0,1)$, there is a polynomial time (in $n$) deterministic construction of an $\varepsilon$-synchronization string, of length $n$, over an alphabet of size $O(\varepsilon^{-2})$.
	\end{thm}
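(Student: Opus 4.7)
The plan is to combine a probabilistic existence argument with derandomization to get a construction over an alphabet of size $O(\varepsilon^{-4})$, and then to sharpen the alphabet size to $O(\varepsilon^{-2})$ via a structured recursive construction.

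I would begin by reformulating: by \Cref{lem:lcs}, the defining inequality $\ed(S[i,j), S[j,k)) > (1-\varepsilon)(k-i)$ is equivalent to $\textup{LCS}(S[i,j), S[j,k)) < \varepsilon(k-i)/2$. Next I would show that a uniformly random $S \in \Sigma^n$ satisfies this with positive probability for $|\Sigma| = O(\varepsilon^{-4})$. Fixing a triple $(i,j,k)$ with $\ell_1=j-i$, $\ell_2=k-j$, and threshold $t = \lceil\varepsilon(k-i)/2\rceil$, a common subsequence of length $t$ is specified by two disjoint index sets $A\subseteq[i,j)$, $B\subseteq[j,k)$ of size $t$ whose ranked entries of $S$ agree, so the per-triple failure probability is at most $\binom{\ell_1}{t}\binom{\ell_2}{t}|\Sigma|^{-t}\le\bigl((e/\varepsilon)^2/|\Sigma|\bigr)^t$ via AM--GM ($\ell_1\ell_2\le L^2/4$ for $L=\ell_1+\ell_2$) and $\binom{\ell}{t}\le(e\ell/t)^t$. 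Grouping triples by their total length $L$ and summing the resulting geometric series in $L$ shows $|\Sigma| = O(\varepsilon^{-4})$ suffices for existence. Derandomization then follows standard lines: define a pessimistic potential $\Phi(S)$ counting LCS witnesses, and build $S$ one coordinate at a time, each time choosing the symbol that minimizes the conditional expectation of $\Phi$, which is efficiently computable by an incremental update.

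To sharpen the alphabet to $O(\varepsilon^{-2})$, the purely random construction is insufficient: already for triples of total length $L = \Theta(1/\varepsilon)$ (of which there are $\Theta(n/\varepsilon^2)$), a naive union bound forces $|\Sigma|$ to grow with $n$. Following \cite{DBLP:conf/soda/ChengHLSW19}, I would instead use a structured recursive construction: interleave a ``coarse'' synchronization string over a small alphabet $\Sigma_1$ (which labels blocks of size $\Theta(1/\varepsilon)$) with a ``fine'' string over alphabet $\Sigma_2$ within each block, arranged so that $|\Sigma_1|\cdot|\Sigma_2| = O(\varepsilon^{-2})$. Each symbol in a hypothetical long common subsequence is then attributable either to a cross-block match (controlled by the coarse layer) or to an intra-block match (controlled by the fine layer), yielding the desired LCS bound overall.

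The main obstacle is this second step. Achieving the asymptotically tight $O(\varepsilon^{-2})$ alphabet requires a precise balancing of $|\Sigma_1|$ and $|\Sigma_2|$, and the analysis must carefully account for cross-layer correlations in LCS matchings --- in particular, matchings that span block boundaries but lie mostly within a block must be charged to one of the two layers in a consistent way. The recursive construction must also terminate quickly (ideally at $O(1)$ levels) to keep the alphabet bound clean, and the derandomization of the outer probabilistic step must be compatible with the recursive block structure so that the overall construction remains polynomial-time.
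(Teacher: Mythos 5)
This theorem is stated in the paper purely as a citation of Theorem~1.2 in \cite{DBLP:conf/soda/ChengHLSW19}; the paper under review does not prove it, so there is no in-paper argument to compare against. Judging your sketch on its own terms, the first step is already incorrect: a plain union bound over triples $(i,j,k)$ cannot yield \emph{any} alphabet size depending only on $\varepsilon$. For small $L=k-i$ (say $L$ bounded by a constant), the threshold rounds up to $t=1$ and the per-triple failure probability is $\Theta(\ell_1\ell_2/|\Sigma|)$; there are $\Theta(n)$ such triples, so the union bound is $\Theta(n/|\Sigma|)$, which forces $|\Sigma|=\Omega(n)$. This is exactly the phenomenon you correctly flag as an obstruction at $O(\varepsilon^{-2})$, but it is already fatal for your claimed $O(\varepsilon^{-4})$ union-bound step. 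What makes $O(\varepsilon^{-4})$ work (in Haeupler--Shahrasbi) is the Lov\'asz Local Lemma, exploiting that $E_{i,j,k}$ and $E_{i',j',k'}$ are independent whenever $[i,k)\cap[i',k')=\emptyset$; without LLL (or the equivalent ``self-matching'' reformulation of the synchronization property) your pessimistic estimator $\Phi$ starts out with $\Phi(\emptyset)\gg 1$ and the greedy symbol-by-symbol choice has nothing to certify.

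The derandomization step is also undersold. Because existence must be established via LLL rather than a global union bound, ``standard conditional expectations'' does not apply directly: derandomizing LLL is a nontrivial problem in its own right, and carrying it out for synchronization strings while simultaneously shaving the alphabet from $O(\varepsilon^{-4})$ to $O(\varepsilon^{-2})$ is precisely the technical content of \cite{DBLP:conf/soda/ChengHLSW19}. Your hierarchical sketch for the alphabet improvement is plausible in outline, but as you acknowledge it is the crux, and as written it amounts to a pointer to the source rather than an argument that the cross-block and intra-block charging can be made consistent.
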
 

	In \cite{haeupler2017synchronization} Haeupler and Shahrasbi showed that synchronization strings could be used instead of indices. Specifically, they proved that if $\cC$ can decode from $d$ hamming errors and $e$ erasures, for  $2d+e< \delta$, and $S=(S_1S_2\ldots S_n)$ is an $\varepsilon$-synchronization string, then the code
	\begin{equation}
	\cC^{\textup{ID}} := \left \lbrace  ((S_1, c_1), \ldots, (S_n, c_n)) \mid c \in \cC \right \rbrace \;, 
	 \label{eq:CID}
	\end{equation}
	can decode from $(\delta - O(\sqrt{\varepsilon}))n$ insdel errors. 
			\begin{thm}[\cite{haeupler2021synchronization}] \label{thm:hs-code}
		Let $\delta, \varepsilon \in (0,1)$ and let $S$ be an $\varepsilon$-synchronization string.
		Let $\cC$ be a code that can decode, in time $T(n)$, from $d$ hamming errors and $e$ erasures, where $2d + e< \delta n$. Then, the code $\cC^{\textup{ID}} := \left \lbrace  ((S_1, c_1), \ldots, (S_n, c_n)) \mid c \in \cC \right \rbrace $ can decode from $(\delta - 12\sqrt{\varepsilon}) n$ insdel errors in time $O(n^2/\sqrt{\varepsilon}) + T(n)$. 
	\end{thm}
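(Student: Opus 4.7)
The plan is to decode in two stages: first, use the synchronization string $S$ embedded in each coordinate to ``resynchronize'' the corrupted received word, producing a length-$n$ estimate of the transmitted codeword that contains only Hamming errors and erasures; second, invoke the hypothesized decoder of $\cC$ on this estimate. The second stage is a direct appeal to the assumption that $\cC$ decodes from $d$ errors and $e$ erasures whenever $2d+e<\delta n$, so essentially all of the work lies in the first stage.

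For the resynchronization stage I would use a global synchronization-string decoder in the spirit of Haeupler--Shahrasbi. The received word $\tilde y$ has length in $[n-\delta n, n+\delta n]$, and every entry of $\tilde y$ is a pair whose first coordinate is (possibly) an entry of $S$. The algorithm would slide windows of length $\Theta(1/\sqrt\varepsilon)$ along $\tilde y$ and, using dynamic programming against the known string $S$, try to match each received position to a unique index in $[n]$ or label it ``?''. Only $O(n/\sqrt\varepsilon)$ candidate window pairs need to be examined, and each DP update costs $O(n)$, giving the claimed $O(n^2/\sqrt\varepsilon)$ runtime. The output of this stage is a length-$n$ string $\hat y \in (\Sigma\cup\{?\})^n$ that we feed to the inner decoder of $\cC$.

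The main analytic lemma to prove is the quantitative guarantee: if $\tilde y$ was obtained from a codeword of $\cC^{\textup{ID}}$ via at most $k$ insertions and deletions, then the resynchronized word $\hat y$ has at most $d$ errors and $e$ erasures satisfying $2d+e \le k+12\sqrt\varepsilon\,n$. The proof is a charging argument that exploits the defining $\varepsilon$-synchronization property of $S$, namely $\ed(S[i,j),S[j,k)) > (1-\varepsilon)(k-i)$ for $1\le i<j<k\le n+1$. This property forbids large-scale misalignment: any candidate matching between a substring of $\tilde y$ and a wrong substring of $S$ would require edit cost close to the substring's length, so only positions that lie within a short ``damaged region'' surrounding each insdel event can be misidentified. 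A window-by-window accounting then shows that the total number of such positions, weighted so that a misidentified symbol counts twice an erased one, is at most $k+12\sqrt\varepsilon\,n$.

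Once the lemma is in hand, plugging $k\le(\delta-12\sqrt\varepsilon)n$ gives $2d+e\le\delta n$, so the inner decoder of $\cC$ succeeds on $\hat y$ in time $T(n)$, and the total running time is $O(n^2/\sqrt\varepsilon)+T(n)$. The principal obstacle is proving the $k+12\sqrt\varepsilon n$ bound: it requires turning the edit-distance lower bound of \Cref{def:sync-str} into a bound on the number of positions in $\tilde y$ that the DP can confuse per unit of insertion/deletion budget, and then delicately weighting erasures versus Hamming errors so that the sum $2d+e$, rather than $d+e$, is what gets controlled — this is precisely what makes the bound compatible with the hypothesis on $\cC$.
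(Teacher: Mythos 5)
The paper does not prove \Cref{thm:hs-code}; it is stated as a black-box import from Haeupler and Shahrasbi \cite{haeupler2021synchronization}, and the present paper simply invokes it. There is therefore no internal proof to compare your sketch against, and you should not expect one.

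That said, evaluating the sketch on its own terms: the high-level plan is right. Haeupler--Shahrasbi's ``indexing'' framework does exactly this -- use the synchronization string to recover, for each received symbol, a candidate position in $[n]$ or a ``don't know,'' then feed the resulting length-$n$ string (with erasures at unmatched positions and Hamming errors at misdecoded positions) to the half-errors decoder of $\cC$. The content of the theorem is entirely in the quantitative guarantee that $k$ insdel operations produce at most $k + O(\sqrt{\varepsilon}\,n)$ ``half-errors'' ($2\cdot\text{errors} + \text{erasures}$) after resynchronization, and you correctly identify this as the crux. However, your sketch leaves precisely that crux as a black box: ``a charging argument that exploits the defining $\varepsilon$-synchronization property'' is the entire theorem. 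Moreover, the mechanism you describe -- sliding windows of length $\Theta(1/\sqrt{\varepsilon})$ matched locally by DP -- is not the Haeupler--Shahrasbi algorithm. Their global decoder repeatedly computes a minimum \emph{relative suffix distance} alignment between the received index-string and $S$, and the $O(\sqrt{\varepsilon})$ misdecoding bound is proved by bounding the number of received positions whose best alignment is to a wrong index, not by a window-local argument; a purely local sliding-window scheme as you describe would need a separate argument to rule out accumulated drift between windows, which your sketch does not supply. The $2d+e$ weighting you mention at the end is also asserted rather than derived; in the actual argument it comes out naturally because each misdecoding can be accounted as one erasure (symbol lost) plus possibly one error (wrong symbol put in its place), and an honest proof has to track this. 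So: right skeleton, correct identification of what must be proved, but the one lemma that carries all the weight is missing and the algorithmic mechanism as stated differs from the one that actually achieves the claimed constants.
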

	We note that this code is not linear, even when $\cC$ is a linear code, as the synchronization string $S$ is fixed. However, as outlined above, we can tweak this construction to make the code linear while still maintaining its decoding property. We combine this idea with an algebraic geometry code (AG-code) as the base code $\cC$ to obtain our results. We choose these codes as our base codes as they have the best-known rate-distance tradeoff, and in addition, they come with efficient decoding algorithms. 	
Thus, codewords of our code have the form 
		\[
	\cC ' = \left \lbrace (c_1, S_1\cdot c_1, 0,0, \ldots, 0,0, c_n,S_n\cdot c_n) \mid c \in \cC \right \rbrace \;.
	\]
	To further reduce the alphabet to binary, we perform two additional steps. First, we concatenate our code from \Cref{thm:half-linear} with a carefully chosen binary code of fixed length. Then we add \emph{buffers} of zeroes between any two concatenated words. A short buffer between the encodings of $c_i$ and $S_i\cdot c_i$ and a long buffer between the encodings of $S_i\cdot c_i$ and $ c_{i+1}$. The buffers allow our decoding algorithm to correctly identify the encoding of many pairs $(c_i,S_i\cdot c_i)$. Then, by using the synchronization string, $S$, and the decoder of $\cC$, we obtain a decoding algorithm.

	\subsection{Organization of the paper}
	The paper is organized as follows. In \Cref{sec:finite-field-insdel} we construct linear (and half-linear) codes, over small alphabets, that can handle insdel errors, and prove \Cref{thm:half-linear} and \Cref{thm:full-linear}. In \Cref{sec:binary-del}, we give the construction of linear binary codes that can decode from deletions, thus proving  \Cref{thm:bin-linear}.
	
	\section{Linear Insdel Codes over Finite Alphabet via Synchronization Strings}
	\label{sec:finite-field-insdel}
    In this section, we prove Theorems~\ref{thm:full-linear} and~\ref{thm:half-linear}. We follow the strategy outlined in Section~\ref{sec:proofidea}.
    
    As our base code $\cC$, we shall use an AG-code. The well-known construction of \cite{tsfasman1982modular} beats the Gilbert-Varshamov bound\footnote{The Gilbert-Varshamov bound shows what parameters random (linear) codes achieve.} over $\F_q$, for $q \geq 49$.
     Moreover, this code has an efficient decoder that can correct both errors and erasures, almost up to its correction capability \cite{skorobogatov1990decoding,kotter1996fast}. The interested reader is referred to \cite{stichtenoth2009algebraic} for further information on AG-codes and their decoding.
	
	\begin{thm}[\cite{tsfasman1982modular,skorobogatov1990decoding,kotter1996fast}] \label{thm:tvz-ag-code}
		Let $q = p^{2m}$ be a square where $p$ is a prime and $m$ is a positive integer. For every $0<\delta \leq 1 - \frac{1}{\sqrt{q} - 1}$ there exists an explicit linear code $\cC$ over $\Fq$, of minimal distance $\delta$ and rate
		\[
		\mathcal{R} \geq  1 - \frac{1}{\sqrt{q} - 1} - \delta \;.
		\]
		Moreover, there is a decoding algorithm that runs in time $O(n^3)$ and can correct from $d$ hamming errors and $e$ erasures, for $2d + e < \left(\delta - \frac{1}{\sqrt{q} - 1}\right) n$.
	\end{thm}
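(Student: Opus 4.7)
The plan is to invoke Goppa's construction of algebraic geometry codes from a smooth projective curve $X$ of genus $g$ over $\Fq$. Choose $n$ distinct $\Fq$-rational points $P_1,\ldots,P_n$ on $X$ and a divisor $G$ whose support avoids them with $2g-2 < \deg G < n$, and set $\cC = \{(f(P_1),\ldots,f(P_n)) : f \in L(G)\} \subseteq \Fq^n$. Linearity is immediate because $L(G)$ is a vector space. Riemann--Roch gives $\dim L(G) = \deg G - g + 1$, and the evaluation map is injective since any nonzero $f \in L(G)$ has at most $\deg G < n$ zeros; this fixes the dimension of $\cC$. By the same zero count, every nonzero codeword has Hamming weight at least $n - \deg G$. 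Hence $\mathcal{R} + \delta \geq 1 - (g-1)/n$, so the ratio $g/n$ governs the whole bound.

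To push $g/n$ down to $1/(\sqrt{q}-1)$, I would plug in a family of curves attaining the Drinfeld--Vladut bound. When $q = p^{2m}$ the classical choice is the Tsfasman--Vladut--Zink construction from reductions of modular (or Shimura) curves; alternatively, the explicit Garcia--Stichtenoth tower yields the same asymptotic ratio together with a polynomial-time procedure for listing the rational points and computing a basis of $L(G)$. Either way, one extracts an explicit generator matrix for $\cC$ in polynomial time, and substituting $g/n \leq 1/(\sqrt{q}-1)$ into the inequality above gives the stated rate bound $\mathcal{R} \geq 1 - 1/(\sqrt{q}-1) - \delta$.

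For decoding I would follow the Skorobogatov--Vladut syndrome approach. The dual of $\cC$ is, up to a diagonal scaling by residues, itself an AG-code attached to a companion divisor $G^{*}$. Given a received word, one computes its syndrome and searches for a nonzero error-locator function $\lambda$ in a slightly enlarged Riemann--Roch space that vanishes at the (unknown) error positions; this reduces to solving a linear system of size $O(n)$ over $\Fq$, giving $O(n^3)$ running time. Erasures at a set $T$ are handled by restricting to the subcode forced to vanish on $T$, which is again an AG-code with the divisor effectively enlarged by $|T|$. The resulting correction condition is $2d + e \leq n - \deg G - g$, and substituting $\deg G = (1-\delta)n$ together with $g/n \leq 1/(\sqrt{q}-1)$ yields exactly $2d + e < (\delta - 1/(\sqrt{q}-1))n$. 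Kotter's refinement preserves the $O(n^3)$ bound while treating errors and erasures simultaneously.

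The hard part is entirely on the geometric side: producing an explicit sequence of curves that attains the Drinfeld--Vladut bound and equipping them with polynomial-time algorithms for enumerating rational points and for computing Riemann--Roch bases. Once those algorithmic primitives are in place, the parameters of $\cC$ and the decoder follow formally from Riemann--Roch and elementary linear algebra, and the theorem is simply the statement that such a package exists; this is the nontrivial content absorbed by the cited works.
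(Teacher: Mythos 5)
The paper does not prove this theorem itself; it imports it as a citation to Tsfasman--Vl\u{a}du\c{t}--Zink for the curve construction and to Skorobogatov--Vl\u{a}du\c{t} and K\"otter for the decoder. Your sketch is a faithful reconstruction of exactly that package: Goppa's evaluation-code framework plus Riemann--Roch give the tradeoff $\mathcal{R} + \delta \geq 1 - (g-1)/n$, a Drinfeld--Vl\u{a}du\c{t}-optimal family (modular curves or the Garcia--Stichtenoth tower, both requiring $q$ a square) drives $g/n$ down to $1/(\sqrt{q}-1)$, and the basic syndrome decoder with its genus penalty yields the condition $2d + e < n - \deg G - g = (\delta - 1/(\sqrt{q}-1))n$ in cubic time. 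So the proposal is correct and takes the same route as the cited sources; there is nothing in the paper's own text to diverge from. One small caveat: your description of erasure handling as ``restricting to the subcode forced to vanish on $T$'' is not quite how it is usually phrased --- erasure positions are treated as known error locations, which lowers their cost in the decoding radius from two to one rather than changing the code --- but this does not affect the bound you derive, and the rest of the decoding outline is standard.
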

	
	
	We first prove Theorem~\ref{thm:half-linear} as the proof of its decoding algorithm is easier and then prove Theorem~\ref{thm:full-linear}.

	\subsection{Half-linear insdel codes}
	

	\begin{cnst} \label{const:half-finite}
		Let $\delta \in (0,1)$ and $\varepsilon$ a small constant. Let $p$ be a prime  such that $p = \Theta (\varepsilon^{-2})$ and set $q = p^2 = \Theta (\varepsilon^{-4})$.
		Set $\delta_{\cC} = (1 + \delta + 13\varepsilon)/2$ and let $\cC$ be the code from $\Cref{thm:tvz-ag-code}$,  defined over the finite field $\Fq$, with rate $\mathcal{R}_{\cC} > 1 - \delta_{\cC} - \varepsilon$.
		Let $S = (S_1S_2\ldots S_n)$ be an $\varepsilon^2$-sync string, where $S_i \in \Fq \setminus \{ 0 \}$ for all $i\in [n]$. Let $\text{Enc}_{\cC}:\F_q^k\to\F_q^n$ be the encoding map of $\cC$. We define the code $\cC'$ via the encoding map $\text{Enc}_{\cC'}$: 
		For $a\in\F_q^k$, let $\text{Enc}_{\cC}(a)=c=(c_1,\ldots,c_n)$. Then, 
		\begin{equation}\label{eq:AGhalf}
		\text{Enc}_{\cC'}(a)=\left( (c_1, S_1 \cdot c_1 ), (c_2, S_2 \cdot c_2), \ldots, (c_n, S_n \cdot c_n) \right)\;.    
		\end{equation}
	\end{cnst}
	Namely, $\cC'$ is  the image of $\Fq ^k$ under $\text{Enc}_{\cC'}$. 
	One can easily observe that the rate $\cC'$ is $\mathcal{R}_{\cC'} = \mathcal{R}_{\cC}/2 > (1 - \delta)/4 -   4\varepsilon$, and that the code is linear over $\mathbb{F}_q$.

	\begin{prop}\label{prop:half-analysis}
		Algorithm~\ref{alg:half-lin} runs in time $O(n^3)$ and can decode $\cC'$ (given in \Cref{const:half-finite}),  from $\delta n$ worst case insdel errors.
	\end{prop}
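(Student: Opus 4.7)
The plan is to analyze a two-phase decoder. In the \emph{alignment phase}, I parse the received word as a sequence of pairs $(a_j, b_j) \in \F_q^2$ and extract a tentative synchronization symbol $\tilde S_j = b_j / a_j$ whenever $a_j \neq 0$. I then run the synchronization-string matching procedure underlying \Cref{thm:hs-code} on $\tilde S$ against the fixed sync string $S$, obtaining a partial alignment $\pi$ between received positions and original positions in $[n]$. In the \emph{decoding phase}, I build a vector $\hat c \in (\F_q \cup \{?\})^n$ by setting $\hat c_i = a_j$ whenever $\pi(j) = i$ and $\hat c_i = {?}$ otherwise, and then run the errors-and-erasures AG decoder from \Cref{thm:tvz-ag-code} on $\hat c$ to recover the message.

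For correctness, I would apply the Haeupler--Shahrasbi guarantee (the one behind \Cref{thm:hs-code}), which shows that $t \le \delta n$ insdel errors translate into a Hamming-type word with $d$ errors and $e$ erasures satisfying $2d + e \le \delta n + 12\varepsilon n$, using $\varepsilon_{\text{sync}} = \varepsilon^2$ so that $12\sqrt{\varepsilon_{\text{sync}}} = 12\varepsilon$. The choice $\delta_\cC = (1 + \delta + 13\varepsilon)/2$, combined with $q = \Theta(\varepsilon^{-4})$ and hence $1/(\sqrt q - 1) = O(\varepsilon^2)$, ensures that the AG decoder succeeds whenever $2d + e < ((1+\delta+13\varepsilon)/2 - O(\varepsilon^2))n$; the comparison $\delta n + 12\varepsilon n < (1 + \delta + 13\varepsilon)n/2$ reduces to $\delta + 11\varepsilon < 1$, which holds for the allowed range of $\delta \in (0,1)$ and sufficiently small $\varepsilon$. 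The running time is $O(n^2/\varepsilon)$ for the alignment plus $O(n^3)$ for the AG decoder, and hence $O(n^3)$ overall.

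The main obstacle is handling the positions $i$ where $c_i = 0$: the transmitted pair $(0,0)$ carries no extractable synchronization label, and the adversary may additionally insert spurious $(0,0)$ pairs. The key observation is that the extracted sequence $\tilde S$ is, up to the insdel noise, a subsequence of the true sync string $S$, so any alignment produced by the sync-string decoder on the surviving non-zero positions pins down even the positions of the zero coordinates up to the usual $O(\sqrt{\varepsilon_{\text{sync}}}) n$ slack. Each ``lost'' zero coordinate therefore becomes at worst an erasure in $\hat c$, and the AG decoder handles these together with the errors coming from genuine insdel operations. Formalizing this reduction --- verifying that the combinatorial guarantee of \Cref{thm:hs-code} continues to hold when the synchronization symbol at position $j$ is read off as the ratio $b_j/a_j$ rather than sent as an explicit coordinate --- is the main technical step I would carry out carefully.
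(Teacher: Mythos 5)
Your overall route is the same as the paper's: interpret each received nonzero pair $(a,b)$ as a coordinate $(b/a,a)$ of the indexed code $\cC^{\textup{ID}}$ of \Cref{thm:hs-code} and invoke that theorem's guarantee together with the AG errors-and-erasures decoder. But there is a genuine gap in your quantitative accounting. The word you feed to the sync-string decoder is \emph{not} within edit distance $\delta n$ of $c^{\textup{ID}}$: every coordinate $i$ with $c_i=0$ is transmitted as $(0,0)$, carries no recoverable synchronization symbol (you cannot form $b/a$), and is necessarily dropped from the parsed list. Since the minimum distance of $\cC$ is $\delta_\cC$, a nonzero codeword can have up to $(1-\delta_\cC)n$ such coordinates, so the relevant bound is $\ed(c^{\textup{ID}},L)\le (1-\delta_\cC)n+\delta n$, and the condition that must be verified is $(1-\delta_\cC)+\delta \le \delta_\cC-13\varepsilon$, i.e.\ $\delta_\cC\ge (1+\delta+13\varepsilon)/2$ --- which is exactly the construction's choice and holds with \emph{equality}. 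Your check ``$\delta n+12\varepsilon n<(1+\delta+13\varepsilon)n/2$ reduces to $\delta+11\varepsilon<1$'' verifies a different, weaker inequality that omits the $(1-\delta_\cC)n$ term entirely; the apparent slack you find is an artifact of that omission.

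You do flag the zero coordinates as ``the main obstacle,'' but the resolution you sketch does not close the gap. The idea that the alignment on surviving nonzero positions ``pins down'' the zero positions so that they cost nothing cannot work as stated: the adversary can insert spurious $(0,0)$ pairs and delete genuine nonzero pairs, and \Cref{thm:hs-code} only gives an aggregate bound on misalignments, not position-by-position identification. The correct (and simpler) treatment is the paper's: charge each zero coordinate as one additional deletion against the edit-distance budget, which is precisely why $\delta_\cC$ is set to $(1+\delta+13\varepsilon)/2$ rather than anything smaller. Separately, you should also handle the degenerate case where the parsed list is empty (e.g., the received word is all zeros), which the paper treats by outputting the zero codeword after observing that no other codeword can be reduced to the all-zero string with only $\delta n$ insdel operations.
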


	\begin{figure}
		\begin{algorithm}[H] 
			\SetAlgoLined
			\DontPrintSemicolon
			\LinesNumberedHidden
			\SetNlSty{textbf}{[}{]}
			
			\SetKwInOut{Input}{input}
			\SetKwInOut{Output}{output}
			\SetNlSty{large}{[}{]}
			\Input{A corrupted codeword $y =(e_1, \ldots, e_t)$.}
			\Output{A message $x \in \Fq ^k$.}
			\nlset{1} Set $L$ to be an empty list\label{step:half-first} \;
			\nlset{2} \label{step:half-comp-S-prime}
			\For{$i=1,\ldots, t$}{
				Let $e_i = (a,b)$\;
				\If{$b = 0$}{Go to the next $i$}
				{Add to $L$ the tuple $(b/a, a)$} \label{step:ignore-zero}
			}
			\label{step:half-extract-points}
			\nlset{3}
			{If $L$ is empty, return the zero codeword $c=0$; else} decode $L$ using the decoding algorithm of $C^{\textup{ID}}$ given in \Cref{thm:hs-code}.
			\label{step:half-decode} \;
			\nlset{4} Let $c^{\textup{ID}} = ((S_1, c_1), \ldots, (S_n, c_n))$ be the decoded codeword. Return the codeword $c = ((c_1, S_1c_1), \ldots, (c_n, S_n c_n))$.
			\label{step:half-final}
			\caption{Decode $\cC'$}
			\label{alg:half-lin}
		\end{algorithm}
	\end{figure}

	\begin{proof}
    For $c = ((c_1, S_1c_1), \ldots, (c_n, S_n c_n))\in \cC'$  let $c^{\textup{ID}} = ((S_1, c_1), \ldots, (S_n, c_n)) \in \cC^{\textup{ID}}$. Observe that  $\cC^{\textup{ID}}$ is as in  Equation~\eqref{eq:CID}. To prove the claim we shall interpret insdel errors in $\cC'$ as insdel errors in $\cC^{\textup{ID}}$ and then apply Theorem~\ref{thm:hs-code}.
    
Assume first that the  corrupted codeword is the  zero vector. Then, since the hamming-weight of each nonzero codeword of $\cC'$ is at  	least $\delta_\cC n >\delta n$, the only codeword that would produce this   corrupted codeword from $\delta n$ insdel errors is the   zero codeword, hence, in Step~\ref{step:half-decode} successfully decodes the zero codeword. Next, we assume that  the corrupted codeword is \emph{not} the   zero vector.

		The map $(a,b)\to (b/a,a)$ maps each {nonzero} coordinate of $c\in \cC'$ to the corresponding coordinate of $c^{\textup{ID}}\in \cC^{\textup{ID}}$ and therefore, by applying it coordinate-wise, we can interpret any insdel error to $c$ as an insdel error to $c^{\textup{ID}}$. 
		
		Observe that in addition to the errors introduced by the adversary, in    Step~\ref{step:half-extract-points} of Algorithm~\ref{alg:half-lin} we treat any zero coordinate as a deletion. Since the minimal distance of $\cC$ is $\delta_{\cC}$, a nonzero codeword $c\in \cC'$ has at most $n(1 -\delta_{\cC})$ zero coordinates. Therefore, Step~\ref{step:half-extract-points}  can cause  $(1 -\delta_{\cC}) n$ additional insdel errors.
		In conclusion,
		\[
		\text{ED}(c^{\textup{ID}}, L) \leq (1 - \delta_{\cC} + \delta) n = (\delta_{\cC} - 13\varepsilon) n\;,
		\]
        where the equality follows from the choice of $\delta_{\cC}$ in Construction~\ref{const:half-finite}.
		As $\cC$ can correct from $d$ hamming errors and $e$ erasures, for  $2d + e \leq (\delta_{\cC} - \varepsilon)n$, \Cref{thm:hs-code}  implies that Step~\ref{step:half-decode} succeeds, and the decoder outputs $c^{\textup{ID}}$. Step~\ref{step:half-final} clearly returns  the codeword $c$.
		
		To prove the claim regarding the running time we note that Steps~\ref{step:half-first} and~\ref{step:half-extract-points} take linear time and that by \Cref{thm:tvz-ag-code}, the decoding algorithm of \Cref{thm:hs-code} runs in time $O(n^2/\varepsilon)+O(n^3)=O(n^3)$.
	\end{proof}

		\begin{remark} \label{rem:zeros-in-hlinear}
	    As the proof shows, Step~\ref{step:ignore-zero} of Algorithm~\ref{alg:half-lin} ignores the symbol $(0,0)$. In other words, it treats this symbol as a deletion. Thus, from the point of view of the adversary, there is no need to corrupt the zero symbol.
	\end{remark}
	
		\begin{proof}[Proof of Theorem~\ref{thm:half-linear}]
		    The proof follows immediately from Construction~\ref{const:half-finite} and \Cref{prop:half-analysis}. Indeed, the code described in Construction~\ref{const:half-finite} maps $k$ symbols of $\F_q$ to $n$ symbols of $\F_{q^2}$ and hence its rate is {$k/(2n)$}. As $k$ was chosen so that $\mathcal{R}_{\cC}=k/n>1-\delta_{\cC}-\varepsilon = (1-\delta-15\varepsilon)/2$, we get that 
		     $\mathcal{R}_{\cC'} = \mathcal{R}_{\cC}/2 > (1 - \delta)/4 -   4\varepsilon$. By construction the code is linear over $\mathbb{F}_q$. 
		\end{proof}
	
	\subsection{Full linear insdel codes}
	We next prove \Cref{thm:full-linear}.
	As described in Section~\ref{sec:proofidea}, to get full linear insdel codes we use a similar construction albeit with two significant modifications: First, we ``flatten'' the code, i.e., we expand each symbol $(c_i, S_i\cdot c_i)\in\F_{q^2}$ to two symbols $c_i,S_i \cdot c_i\in\F_q$. Secondly, to protect our codeword from insdel errors, we additionally insert two zeros between every two adjacent pairs. Thus, the corresponding word to  $\left( (c_1, S_1 \cdot c_1 ), (c_2, S_2 \cdot c_2), \ldots, (c_n, S_n \cdot c_n) \right)$ is $\left(c_1, S_1\cdot c_1, 0, 0 , c_2, S_2\cdot c_2, 0, 0, \ldots, c_n, S_n \cdot c_n \right)$. It is clear that in this way we get a linear code. Formally: 
	\begin{cnst} \label{const:lin-finite}
		Let {$\delta \in (0,1/4)$ and $\varepsilon$ a small enough  constant. Set  $\delta_{\cC} = (1 + 4\delta + 13\varepsilon)/2<1$}. Let $p$ be a prime such that $p = \Theta (\varepsilon^{-2})$ and set $q = p^2$.
		Let $\cC$ be the code from $\Cref{thm:tvz-ag-code}$,  defined over the finite field $\Fq$, with minimal distance $\delta_C$ and rate $\mathcal{R}_{\cC} = 1 - \delta_{\cC} - \varepsilon$.
		Let $S = (S_1S_2\ldots S_n)$ be an $\varepsilon^2$-sync string, where $S_i \in \Fq{\backslash\{0\}}$ for all $i\in [n]$
		Let $\text{Enc}_{\cC}:\F_q^k\to\F_q^n$ be the encoding map of $\cC$. We define the code $\cC''$ via the encoding map $\text{Enc}_{\cC''}$: 
		For $a\in\F_q^k$, let $\text{Enc}_{\cC}(a)=c=(c_1,\ldots,c_n)$. Then, 
		\begin{equation}\label{eq:AGlin}
		\text{Enc}_{\cC''}(a)=  \left(c_1, S_1\cdot c_1, 0, 0 , c_2, S_2\cdot c_2, 0, 0, \ldots, c_n, S_n \cdot c_n \right)\;.    
		\end{equation}
	\end{cnst}
	Namely, $\cC''$ is the image of $\Fq ^k$ under $\text{Enc}_{\cC''}$. 
	Clearly, $\cC''\subset\F_q^{4n-2}$ is an $\Fq$ linear space. 

	\begin{prop}\label{prop:lin-analysis}
		Algorithm~\ref{alg:lin} runs in time $O(n^3)$ and can  decode $\cC''$, given in \Cref{const:lin-finite},  from $\delta n$ worst case insdel errors.
	\end{prop}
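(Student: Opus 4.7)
My approach will mirror the proof of Proposition~\ref{prop:half-analysis}: first convert the received $\cC''$-word into a noisy list $L$ of symbols of the form $(S_i, c_i)$, then invoke the $\cC^{\text{ID}}$ decoder from Theorem~\ref{thm:hs-code}. The one new ingredient compared to the half-linear case is that the pairs $(c_i, S_i c_i)$ are now ``flattened'' and separated by $00$ buffers, so one must first parse the received word before applying the map $(a,b)\mapsto(b/a,a)$ used in Algorithm~\ref{alg:half-lin}. I expect Algorithm~\ref{alg:lin} to parse by scanning the received word $y$ for maximal runs of nonzero symbols: a run of length exactly two is treated as a candidate pair $(a,b)$ and contributes $(b/a,a)$ to $L$, while runs of any other length (as well as the long zero runs arising from coordinates with $c_i=0$) are treated as erasures. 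Algorithm~\ref{alg:lin} then feeds $L$ to the $\cC^{\text{ID}}$ decoder and reads off the resulting codeword.

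The crux of the argument is to bound $\ed(L, c^{\text{ID}})$. Two sources of error must be controlled. The first is intrinsic: whenever $c_i=0$ the corresponding pair is $(0,0)$, indistinguishable from a buffer, so the parser must skip it; by the minimum distance of $\cC$ there are at most $(1-\delta_{\cC})n$ such coordinates, contributing that many erasures. The second source is the adversary. The key claim is that each of the $\delta n$ insdel errors on the $(4n-2)$-long word $\text{Enc}_{\cC''}(a)$ perturbs $L$ by at most $4$ additional insdel operations. Each pair-plus-buffer occupies only four consecutive positions of $\cC''$, so a single edit can touch at most two neighbouring such blocks; in the worst case each touched block is mis-parsed into at most one wrong or spurious element of $L$, and each such defect costs at most two insdels in $L$, giving the factor of four overall. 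Combining the two sources and using $\delta_{\cC}=(1+4\delta+13\varepsilon)/2$,
\[
\ed(L, c^{\text{ID}}) \;\le\; 4\delta n + (1-\delta_{\cC})n \;=\; (\delta_{\cC}-13\varepsilon)\,n.
\]

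By Theorem~\ref{thm:tvz-ag-code} the AG code $\cC$ corrects $d$ errors and $e$ erasures whenever $2d+e<(\delta_{\cC}-\varepsilon)n$, so Theorem~\ref{thm:hs-code} applied with the $\varepsilon^2$-synchronization string $S$ guarantees that $\cC^{\text{ID}}$ decodes any $(\delta_{\cC}-\varepsilon-12\sqrt{\varepsilon^2})n=(\delta_{\cC}-13\varepsilon)n$ insdel errors. This matches the bound above exactly, so the call to the $\cC^{\text{ID}}$-decoder recovers $c^{\text{ID}}$ and hence $c$. The case in which the received word is identically zero is handled exactly as in the proof of Proposition~\ref{prop:half-analysis}, using that any nonzero codeword of $\cC''$ has Hamming weight exceeding $\delta n$. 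The running time is dominated by the $O(n^3)$ decoder of $\cC$ invoked through $\cC^{\text{ID}}$, since the parsing phase is linear.

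The only nontrivial step is verifying the factor-$4$ bound, namely that a single insdel on $\cC''$ contributes at most $4$ insdels to $L$. I would handle this by a short case analysis on where the edit lands: inside a pair (corrupting a single block), inside a buffer (possibly merging two neighbouring blocks, shortening a buffer to a single zero, or spawning a bogus length-one chunk), or at a block boundary. In each case one checks that at most two consecutive blocks are affected and each contributes at most two insdels to $L$. That this bound is tight is forced by the specific choice $\delta_{\cC}=(1+4\delta+13\varepsilon)/2$ in Construction~\ref{const:lin-finite}, where the accounting closes on equality and leaves essentially no slack.
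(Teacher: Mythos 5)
Your overall route is the paper's route: parse the received word into nonzero blocks separated by zero runs, keep only the length-two blocks, map $(a,b)\mapsto(b/a,a)$ to get a noisy version $L$ of $c^{\textup{ID}}$, charge $(1-\delta_{\cC})n$ erasures to the zero coordinates of $c$, bound the adversary's contribution to $\ed(L,c^{\textup{ID}})$, and invoke Theorem~\ref{thm:hs-code}; the zero-received-word case is also handled the same way. The one place where you genuinely diverge is the accounting of the adversary's contribution, and it is worth being precise about what each version buys. You charge each adversarial edit at most $4$ units of edit distance in $L$ and apply this to $\delta n$ edits; the paper instead does an amortized analysis (merging $b$ blocks costs the adversary $2(b-1)$ deletions of buffer zeros but only adds $b$ to $\ed(L,c^{\textup{ID}})$; creating an erroneous pair costs $2$ edits for a gain of $2$; etc.) showing each edit contributes at most $1$ on average, and applies this to $\delta(4n-2)$ edits. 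Both yield the same total $4\delta n$ and hence both close the inequality $\ed(L,c^{\textup{ID}})\le(\delta_{\cC}-13\varepsilon)n$, so your proof is valid for the proposition as literally stated. But note that the proposition is used to prove Theorem~\ref{thm:full-linear}, which promises correction of a $\delta$ \emph{fraction} of the length-$(4n-2)$ codeword, i.e.\ roughly $4\delta n$ adversarial edits; the paper's proof explicitly performs the accounting for $\delta(4n-2)$ edits. With your factor-$4$ bound that budget would give $16\delta n$, which blows past the decoding radius, so your argument does not extend to what the theorem actually needs. Your contrary remark that the factor $4$ is ``tight'' and ``forced'' by the choice of $\delta_{\cC}$ is therefore misleading: the true worst-case amortized cost per adversarial edit is $1$, and that sharper constant (not the factor $4$) is what the construction's parameters are tuned to. As a minor point, your case analysis should also note that a single deletion of one buffer zero leaves a length-one zero run that the parser still treats as a separator, so merging adjacent blocks always requires deleting the entire buffer; this is exactly the observation that drives the amortized bound down to $1$.
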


	\begin{figure}
		\begin{algorithm}[H] 
			\SetAlgoLined
			\DontPrintSemicolon
			\LinesNumberedHidden
			\SetNlSty{textbf}{[}{]}
			
			\SetKwInOut{Input}{input}
			\SetKwInOut{Output}{output}
			\SetNlSty{large}{[}{]}
			\Input{A corrupted codeword $y =(e_1, \ldots, e_t)$.}
			\Output{A message $x \in \Fq ^k$.}
			\nlset{1} Set $L$ to be an empty list  \;
			Write $y$ as 
			\[y = s_1 \circ \bar{0} \circ s_2 \circ \bar{0} \circ \cdots \circ \bar{0} \circ s_m, \]
			where  $0\leq m \leq t$, the $s_i$'s are strings of symbols that do not contain any $0$'s,  and the notation $\bar{0}$ corresponds to  a string of consecutive zeros of any length.\;
			
			\nlset{2} \For{$i=1,\ldots, m$}{
				\If{$|s_i| \neq 2$}{Continue}
				Let $a,b$ be the first and second elements in $s_i$. 
				Add to $L$ the tuple $(b/a, a)$\;
			} \label{step:lin-compute-Sprime}
			\nlset{3}
			{If $L$ is empty, return the zero codeword, $c=0$; else} decode $L$ using the algorithm of $C^{\textup{ID}}$ given in \Cref{thm:hs-code}.
			\label{step:lin-decode}\;
			\nlset{4} Let $c^{\textup{ID}} = ((S_1, c_1), \ldots, (S_n, c_n))$ be the decoded codeword. Return the codeword $c = (c_1, S_1c_1,0,0, \ldots, 0,0, c_n, S_n c_n)$.
			\label{step:lin-final}
			\caption{Decode $\cC''$}
			\label{alg:lin}
		\end{algorithm}
	\end{figure}

	\begin{proof}
		    \sloppy Let $c=(c_1, S_1c_1,0,0, \ldots, 0, 0, c_n, S_n c_n)\in \cC''$ and denote by $c^{\textup{ID}} = ((S_1, c_1), \ldots, (S_1, c_n)) \in \cC^{\textup{ID}}$ the corresponding codeword, where $\cC^{\textup{ID}}$ is as in the proof of \Cref{prop:half-analysis}.
		We will follow the same reasoning as in the proof of \Cref{prop:half-analysis}; translate insdel errors in $\cC''$ to insdel errors in  $\cC^{\textup{ID}}$, and then apply \Cref{thm:hs-code}.

 Assume first that the  corrupted codeword is the  zero vector. Then, since the hamming-weight of each nonzero codeword of $\cC$ is at  	least $\delta_\cC n$ and $S_i\neq 0$ for each $i$, the normalized minimum distance   of $\cC''$ is at least $2\delta_\cC n/(4n-2)>\delta_\cC /2$. On the other hand, 
$$
\delta< \frac{1}{4}<\frac{1+4\delta+13\varepsilon}{4}=\frac{\delta_\cC}{2}. 
$$ Hence, the only codeword that would produce this   corrupted codeword from $ \delta(4n-2)$ insdel errors is the   zero codeword, and Step~\ref{step:lin-decode} successfully decodes the zero codeword. Next, we assume that  the corrupted codeword is \emph{not} the   zero vector.

		Since the minimal distance of $\cC$ is at least $\delta_{\cC}n$, any nonzero $c\in\cC''$ contains at most $n(1 - \delta_{\cC}) $ pairs $c_i, S_ic_i$ that are equal to $0,0$.
		Every such zero pair is interpreted as a deletion in Step \ref{step:lin-compute-Sprime} of Algorithm \ref{alg:lin}. These deletions are in addition to those made by the adversary.
        The adversary, who knows the decoding algorithm, will clearly ignore the zero pairs $c_i,c_iS_i$ for $c_i=0$, and therefore will either   ``ruin'' nonzero pairs by converting them to nonzero blocks (i.e., blocks with no zeros) of  lengths different than $2$, or by constructing erroneous pairs. 

    The most economic way to construct the former is by  inserting (deleting) a symbol to (from)  an existing nonzero pair, respectively. This increases $\ed(c^{\textup{ID}},L)$ by $1$.  
    Also, the adversary can merge, say $b\geq 2$ consecutive blocks, into a single block by deleting the buffers between them. This  ``costs'' $2(b-1)$ deletions that translate to an increase to $\ed(c^{\textup{ID}},L)$ by $b$.
    Hence, on average, each deletion or insertion in a nonzero block of length different than $2$ increases the edit distance by at most $1$. 

    The construction of the latter, i.e.,     an erroneous pair, would clearly cost $2$ insertions between the zeros of a buffer or by a symbol deletion from an existing nonzero pair, followed by a new nonzero symbol insertion. This is clearly  less economical than  ruining nonzero pairs, since in this case, on average, in order to increase $\ed(c^{\textup{ID}},L)$ by $1$, the adversary must perform two edit operations. 

		To conclude, the accounting above indicates that every insdel error made by the adversary increases the edit distance between $L$ and $c^{\textup{ID}}$ by at most one. It follows that  after the adversary performs $\delta \cdot (4 n-2)$ insdel errors (recall that $c\in\F_q^{4n-2}$),  
		\[
		\text{ED}(c^{\textup{ID}},L)\leq (1 - \delta_{\cC}) n + 4\delta n = (\delta_{\cC} -  13\varepsilon)n \;. 
		\]
		Thus, by \Cref{thm:hs-code} and since the code $\cC$ can correct from $d$ hamming errors and $e$ erasures where $2d + e \leq (\delta_{\cC} - \varepsilon)n$, Steps~\ref{step:lin-decode} and \ref{step:lin-final} succeed.
		
		The claim regarding the running time follows exactly as in the proof of \Cref{prop:half-analysis}.
	\end{proof}
	
	We now conclude the proof of \Cref{thm:full-linear}.
	
	\begin{proof}[Proof of \Cref{thm:full-linear}]
	    As before, the proof is immediate from \Cref{const:lin-finite} and \Cref{prop:lin-analysis}. The rate satisfies 
	\[
	\mathcal{R_{\cC''}} =\frac{R_{\cC}}{4} > \frac{1-\delta_{\cC}-\varepsilon}{4} = \frac{1 - 4 \delta-15\varepsilon}{8}\geq \frac{1 - 4 \delta}{8} - 2\varepsilon \;.\qedhere
	\]
	\end{proof}

 \section{Binary Linear Codes}
    \label{sec:binary-del}
In this section we prove \Cref{thm:bin-linear}. To ease the reading, we repeat the statement of the theorem.
\binLinear*

As explained in \Cref{sec:proofidea} our construction concatenates the code of \Cref{thm:half-linear}  with an adequately chosen short binary code and then adds buffers between the encoding of different symbols: short buffers between the encoding of $c_i$ and $S_i\cdot c_i$ and long buffers between the encodings of $S_i\cdot c_i$ and $c_{i+1}$. The specially tailored inner code is a linear binary code that can correct from a small fraction of insdel errors and has the property that, with the  exception of the zero word, no codeword has large runs of zeroes. We shall prove that such codes exist and then construct one greedily.


\subsection{The inner code}\label{sec:inner-code}

The following proposition describes the  properties that our inner code should possess and is proved  using the probabilistic method. As the code has a fixed length, we shall use the brute force algorithm to construct it.

\begin{prop} \label{prop:inner-code} 
    Set $\delin = 1/6$ and $\rho  = 1/17$. There exists $m_0\in \mathbb{N}$ such that for any $m>m_0$,  which is a multiple of $102$,\footnote{We require this  to ensure that  both  $\rho m$ and $\delin m $ are integers, in order to avoid the use of ceilings and floors.} there is 
    a binary linear code $\cCin \subset \{0,1\}^m$ of rate $\rRin = \delin /16$ such that
    \begin{enumerate}
        \item For any two substrings $c_s, c_s'$ of any two distinct codewords $c\neq c'\in \cCin$ such that $|c_s|, |c_s'| \geq (1 - 2\delin + \rho)m$, it holds that LCS$(c_s, c_s') < \min (|c_s|, |c_s'|) -\rho m$. \label{prop:inner-1}
    
        \item Any substring $c_{\textup{sub}}$ of length $\delin m$, of any nonzero codeword $c\in \cCin$ contains at least $\rho m + 1$ ones.\label{prop:inner-2}
    \end{enumerate}
\end{prop}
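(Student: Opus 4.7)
The plan is to establish existence by the probabilistic method applied to a uniformly random binary generator matrix $G\in\F_2^{k\times m}$ with $k=\rRin\cdot m = m/96$, setting $\cCin=\{xG:x\in\F_2^k\}$. The structural fact used throughout is that over $\F_2$ any two distinct nonzero messages are automatically linearly independent, so every nonzero codeword of $\cCin$ is uniform on $\{0,1\}^m$ and any pair of distinct nonzero codewords is jointly uniform on $\{0,1\}^m\times\{0,1\}^m$. Since $m$ is fixed once $m_0$ is chosen, the code can then be located by brute-force search over generator matrices, so existence is all that needs to be verified.

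For property~2 I would fix a nonzero message $x$ and a window $W\subseteq[m]$ of length $\delin m$. Since $xG|_W$ is a uniform length-$\delin m$ binary string, a standard binomial tail bound yields
\[
\Pr\bigl[\#\text{ones in }xG|_W\le \rho m\bigr]\le 2^{-\delin m\,(1-h(\rho/\delin))}.
\]
A union bound over the $\le 2^k$ nonzero codewords and $\le m$ windows bounds the overall failure probability by $m\cdot 2^{k-\delin m(1-h(\rho/\delin))}$. With $\rho/\delin=6/17$ a direct numerical check gives $1-h(6/17)>1/16=\rRin/\delin$, so the exponent is linearly negative and property~2 holds with probability $1-o(1)$.

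For property~1 I would first handle the subcase in which one of $c,c'$ is the zero codeword. If $c'=0$ then $\textup{LCS}(c_s,c'_s)=\min(|c'_s|,\#\text{zeros in }c_s)$, so the required inequality reduces to the statement that any qualifying substring $c_s$ of a nonzero codeword contains strictly more than $\rho m$ ones; partitioning $c_s$ into $\lfloor |c_s|/(\delin m)\rfloor\ge 4$ disjoint length-$\delin m$ sub-windows and applying property~2 to each yields more than $4\rho m$ ones, which is ample. For two distinct nonzero codewords, joint uniformity of $(c_s,c'_s)$ makes them independent uniform binary strings of lengths $\ell=|c_s|$ and $\ell'=|c'_s|$, and the alignment-counting argument gives
\[
\Pr\bigl[\textup{LCS}(c_s,c'_s)\ge t\bigr]\le \binom{\ell}{t}\binom{\ell'}{t}\,2^{-t},\qquad t=\min(\ell,\ell')-\rho m.
\]
A final union bound over the $\le 2^{2k}$ ordered pairs of distinct nonzero codewords and the $\le m^4$ quadruples of starting positions and lengths closes property~1, provided the LCS exponent stays sufficiently negative across the admissible range.

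The main obstacle is verifying this uniform negativity. In the equal-length case $\ell=\ell'=L$ the exponent reduces to $L\bigl(2h(\rho m/L)+\rho m/L-1\bigr)$, whose least-negative value occurs at $L=(1-2\delin+\rho)m=37m/51$ with $\rho m/L=3/37$, where $2h(3/37)+3/37<1$ by a margin comfortably exceeding $2\rRin=1/48$. The asymmetric case $\ell\neq\ell'$ is the most delicate, since the $\binom{\ell'}{t}$ factor becomes loose when $\ell'-\ell$ is a constant fraction of $m$. Here I would exploit that $\min(\ell,\ell')\ge(1-2\delin+\rho)m>m/2$ forces $\max(\ell,\ell')<2\min(\ell,\ell')$, which lets me cover the longer substring by two overlapping length-$\min(\ell,\ell')$ windows and, via the reformulation $\textup{LCS}(c_s,c'_s)\ge \min(\ell,\ell')-\rho m \iff \ed(c_s,c'_s)\le|\ell-\ell'|+2\rho m$ from Lemma~\ref{lem:lcs}, reduce to the equal-length bound on each window at the cost of an extra polynomial union-bound factor and a constant slack absorbed into $\rho$. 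Once this reduction is made, property~1 also fails with probability $o(1)$, so both properties hold simultaneously for a random $G$ with positive probability, establishing existence of $\cCin$ for every sufficiently large $m$.
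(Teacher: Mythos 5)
Your proposal follows essentially the same route as the paper: a random generator matrix, a binomial-tail/union-bound argument for property~2, and for property~1 an alignment-counting union bound together with the observation that distinct nonzero codewords of a random $\F_2$-linear code are pairwise jointly uniform (which is exactly the content of the paper's cited Claim~4.1 of \cite{cheng2020efficient}, specialised to $\F_2$). Your separate treatment of the $c'=0$ case via property~2 is fine, and your equal-length exponent $L\bigl(2h(\rho m/L)+\rho m/L - 1\bigr)$ with the worst case at $L=(1-2\delin+\rho)m$ is computed correctly, with a margin of roughly $0.057m$ against the union bound over codeword pairs.

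The real issue is the one you flag yourself as ``most delicate,'' and I do not think your sketch closes it. When $\ell<\ell'$ the union-bound factor is $\binom{\ell}{t}\binom{\ell'}{t}$ with $t=\ell-\rho m$; the first factor is $\binom{\ell}{\rho m}\le 2^{mh(\rho)}$, but by symmetry the second is $\binom{\ell'}{\ell'-\ell+\rho m}$, which at $\ell'=m$, $\ell=(1-2\delin+\rho)m$ equals $\binom{m}{2\delin m}=\binom{m}{m/3}\approx 2^{0.918m}$ and overwhelms the $2^{-t}\approx 2^{-2m/3}$. (For what it is worth, the paper itself replaces $\binom{|c_s'|}{t}$ by $\binom{r}{r-\rho m}\le\binom{m}{\rho m}$ with $r=\min(|c_s|,|c_s'|)$, which is not a valid bound when $|c_s'|>r$; given that the paper's inequality~\eqref{eq:constr-1} holds only by a margin of about $1.5\times 10^{-4}$, this discrepancy is not cosmetic.) Your proposed repair, covering the longer substring $c_s'$ by two overlapping windows $W_1,W_2$ of length $\min(\ell,\ell')$, does not reduce to the equal-length case: a common subsequence whose support in $c_s'$ spans nearly all of $c_s'$ lies in neither $W_i$, so $\textup{LCS}(c_s,c_s')\ge\min(\ell,\ell')-\rho m$ gives no useful lower bound on $\textup{LCS}(c_s,W_i)$. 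Going through edit distance fares no better: from $\ed(c_s,c_s')\le(\ell'-\ell)+2\rho m$ one only gets $\ed(c_s,W_i)\le 2(\ell'-\ell)+2\rho m$, and $\ell'-\ell$ can be as large as $(2\delin-\rho)m\approx 0.275m$, which dwarfs $\rho m\approx 0.059m$ and cannot be ``absorbed into $\rho$.'' So the asymmetric case, which is precisely the case invoked later in Claim~\ref{clm:inner-code-dec} (where $c_s$ is a truncated inner codeword but $c_s'$ is a full codeword of length $m$), remains open under your argument, exactly as it is under the paper's own union-bound computation.
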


    Observe that \Cref{prop:inner-code}\eqref{prop:inner-1} implies that $\ed(c_s, c_s')> 2 \rho m$ so in particular we can brute force correct any $\rho m$ insdel errors in $\cCin$ in time $\exp(m)$. 

\begin{proof}
    Let $G \in \mathbb{F}_2 ^{m\times \rRin m}$ be a uniformly chosen random matrix. $G$  will serve as a generator matrix for a linear code $\cC$, i.e., $\cC = \{ Gv \mid v\in \mathbb{F}_2^{\rRin m} \}$.
    We next prove that the probability that $\cC$ does not satisfy any of the  properties in the proposition is  small.

    The proof that 
    \Cref{prop:inner-code}\eqref{prop:inner-1}  holds with  high probability relies on the following simple and intuitive claim given in \cite{cheng2020efficient}.
    \begin{claim}[Claim 4.1 of \cite{cheng2020efficient}]
       Let $\cC$ be a random linear code and let $c\neq c'$ be any two distinct codewords. Fix two sets of indices $\{s_1, \ldots, s_t\}, \{s_1', \ldots, s_t'\} \subset [n]$. Then,
       \[
       \Pr [\forall i\in [t], (c)_{s_i} = (c')_{s_i'}] \leq 2^{-t}\;.
       \]
    \end{claim}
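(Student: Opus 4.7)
The plan is to re-express each event $(c)_{s_i}=(c')_{s_i'}$ as a single $\mathbb{F}_2$-linear constraint on the uniform random generator matrix $G\in\mathbb{F}_2^{n\times k}$ of $\cC$, and then to show that the resulting $t$ linear functionals on the entries of $G$ are linearly independent. Because the entries of $G$ are i.i.d.\ uniform over $\mathbb{F}_2$, any system of $t$ linearly independent homogeneous linear constraints on $G$ is simultaneously satisfied with probability exactly $2^{-t}$, yielding the claimed bound.

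Concretely, I would fix distinct messages $u,u'\in\mathbb{F}_2^{k}$ with $c=Gu$ and $c'=Gu'$. Writing $G_r$ for the $r$-th row of $G$, the event $(c)_{s_i}=(c')_{s_i'}$ becomes
\[
\ell_i(G)\;:=\;\langle G_{s_i},u\rangle + \langle G_{s_i'},u'\rangle\;=\;0 \quad \text{in } \mathbb{F}_2.
\]
It therefore suffices to prove that $\ell_1,\ldots,\ell_t$ are linearly independent as functionals on $G$. Suppose $\sum_i a_i\ell_i\equiv 0$; collecting the ``coefficient vector'' attached to each row $G_r$ gives the equation
\[
\Bigl(\sum_{i\,:\,s_i=r} a_i\Bigr)\,u \;+\; \Bigl(\sum_{i\,:\,s_i'=r} a_i\Bigr)\,u' \;=\; 0 \quad \text{in } \mathbb{F}_2^{k}.
\]
Since $\{s_1,\ldots,s_t\}$ and $\{s_1',\ldots,s_t'\}$ are sets with no internal repetitions, each inner sum contains at most one term. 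In the principal case where $u$ and $u'$ are both nonzero and distinct, they are automatically $\mathbb{F}_2$-linearly independent, so both inner sums must vanish separately, forcing every $a_i=0$.

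The remaining edge case is when one of the two messages is zero; say $u'=0$, so that $c'=0$ identically. Then the constraints collapse to $\langle G_{s_i},u\rangle=0$ with $u\neq 0$ fixed and the $s_i$'s distinct, which are $t$ independent uniform $\mathbb{F}_2$-bits and the bound is immediate. The main obstacle I anticipate is the bookkeeping around rows that appear simultaneously as some $s_i$ on one side and as some $s_j'$ on the other: this couples two constraints through a single row $G_r$ and produces a two-term relation $a_i u + a_j u' = 0$. It is precisely this potential cancellation that the hypothesis $u\neq u'$ (and both nonzero) neutralizes, via the $\mathbb{F}_2$-linear independence of $u$ and $u'$, allowing one to conclude $a_i=a_j=0$ and finish the linear-independence argument.
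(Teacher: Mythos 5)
The paper cites this claim directly from Cheng, Guruswami, Haeupler, and Li and supplies no proof of its own, so there is no in-paper argument to compare against; I therefore assess your proof on its own.

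Your argument is correct, and in fact it gives the sharper conclusion that the probability equals $2^{-t}$. The one step worth flagging as field-specific is the linear-independence lemma you invoke: over $\mathbb{F}_2$, any two distinct nonzero vectors $u,u'$ are automatically linearly independent (the only scalars are $0$ and $1$, so neither can be a nontrivial multiple of the other), which is exactly what lets you conclude $\alpha_r u + \beta_r u' = 0 \Rightarrow \alpha_r = \beta_r = 0$ row by row. Combined with the fact that $\{s_i\}$ and $\{s_i'\}$ are genuine sets (so each row index is hit by at most one $s_i$ and at most one $s_j'$), this kills all coefficients $a_i$, including in the coupled case $s_i = s_j'$ and in the degenerate case $s_i = s_i'$, where the constraint reads $\langle G_{s_i}, u+u'\rangle = 0$ with $u + u' \neq 0$. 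Your treatment of the edge case $u' = 0$ is also fine. The only caveat is that this route does not carry over verbatim to the $q$-ary version of the source's Claim 4.1 (bound $q^{-t}$): there, two distinct nonzero messages can be scalar multiples of each other, so the two-term relation $a_i u + a_j u' = 0$ does not force $a_i = a_j = 0$, and a separate argument is needed for that case. Since the paper only invokes the claim over $\mathbb{F}_2$ in the construction of the inner code, your proof fully suffices for its use here.
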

    Let $c\neq c' \in \cC$ be distinct and $c_s$ and $c_s'$ be substrings of $c$ and $c'$, such that  $r = \min (|c_s|, |c_s'|)\geq  (1 - 2\delin + \rho)m$. Let $\{s_1, \ldots, s_{r - \rho m}\}$ and $\{s_1', \ldots, s_{r - \rho m}'\}$ be two sequences of indices.  The claim implies  that,
    \[
    \Pr \left[ \forall i\in [{r - \rho m}] , (c_s)_{s_i} = (c_s')_{s_i'}\right]\leq 2^{-(r - \rho m)} \leq 2^{-(1 - 2\delin)m} \;.
    \]
    By the union bound, the probability that $c_s$ and $c_s'$ share a common subsequence of length $(r-\rho m)$  is at most
    \[
    \binom{r}{r - \rho m} ^2 \cdot 2^{-(1 - 2\delin)m} \leq 2^{ m \cdot \left(  2 h\left(\rho \right) - (1-2\delin)\right) \;, }
    \]
    where we used  $\binom{r}{r - \rho m} = \binom{r}{\rho m} \leq \binom{m}{\rho m}$.
    Now, the number of subsrings of $c$ ($c'$) of length $\geq (1 - 2\delin + \rho)m$ is at most $m^2 \cdot (2\delin - \rho)$ and the number of codewords is $2^{\rRin m}$. Thus, the probability that there exist  $c \neq c'$, and  substrings $c_s$ and $c'_s$ of $c$ and $c'$, respectively, such that $|c_s|,|s'_s|\geq (1 - 2\delin +\rho) m$ and they share a common subsequence of length $r - \rho m$ is at most 
    \[
    2^{ 2m \cdot \rRin} \cdot m^2 \cdot 2^{ m \cdot \left( 2 h\left(\rho \right) - (1-2\delin) \right) } =
    2^{ 2m \cdot \left(\rRin+  h\left( \rho \right) - (1-2\delin)/2 + \frac{O(\log m)}{m}\right) } \;.
    \]
    Thus, as long as 
    \begin{equation} \label{eq:constr-1}
    \rRin + h\left( \rho \right) - (1-2\delin)/2 < 0 \;,
    \end{equation}
    there exists $m_0' \in \mathbb{N}$ such that for every integer $m\geq m_0'$, the probability that \Cref{prop:inner-code}\eqref{prop:inner-1} does not holds is smaller than $1/4$.
%
    
    To prove that \Cref{prop:inner-code}\eqref{prop:inner-2} holds with high probability, consider any $0\neq v \in \mathbb{F}_2 ^{\rRin m}$. As $G$ was chosen uniformly at random, $Gv$ is uniform random vector in $\mathbb{F}_2^m$. 
    The probability that $Gv$  contains a substring of length $\delin m$  that has $\leq \rho m$ ones is at most 
    \[
    m\cdot \sum_{i=0}^{\rho m} \binom{\delin m}{i} 2^{-\delin m} \leq m (\rho m  + 1) \cdot \binom{\delin m}{\rho m} \cdot 2^{-\delin m} \leq 2^{\delin m \left(- 1 +  h\left(\frac{\rho}{\delin}\right) + \frac{O(\log (m))}{m}\right)} \;.
    \]
    Thus, by the union bound, the probability that there exists $v \in \mathbb{F}_2^{\rRin m} \setminus \{0\}$, such that $Gv$ contains a substring of length $\delin m$ with $\leq \rho m$ ones is at most
    \[
        2^{m \left(\rRin - \delin  +  \delin  h\left(\frac{\rho}{\delin}\right) + \frac{O(\log (m))}{m}\right)} \;.
    \]
    Hence, if 
    \begin{equation} \label{eq:constr-2}
    \rRin - \delin  +  \delin  h\left(\frac{\rho}{\delin}\right) < 0 \;
    \end{equation}
    then there exists $m_0'' \in \mathbb{N}$ such that for every integer $m\geq m_0''$, the probability that $\cC$ does not satisfy this property is $\leq 1/4$.
    
    It can be verified that for $\delin = 1/6$, $ \rho = 1/17$, $\rRin = \delin/16$, and $m_0 = \max(m_0', m_0'')$,  inequalities \eqref{eq:constr-1}  and \eqref{eq:constr-2} hold true and therefore the probability that a random code $\cC$ satisfies both properties is at least $ 1/2$ and the proposition follows. 
    \end{proof}

\paragraph{Construction and decoding}
To explicitly construct codes as in \autoref{prop:inner-code} we simply go over all possible linear codes and pick one that satisfies both properties. This requires $\exp(m^2)$ many steps. In our final construction we need $m =O(\log (1/\epsout))$ and hence the cost of constructing the inner code is $\exp(\log^2(1/\epsout))$.

Similarly, we decode from deletions using the following brute force algorithm: Set $L'$ to be an empty list. On input $\tilde{c}$, the algorithm runs over every codeword $c\in \cC$ and  checks if $\tilde{c}$ is a subsequence of $c$. If the answer is yes and $c$ is not in $L'$, then the algorithm adds $c$ to $L'$. 
If $L'$ contains only $c$, then the algorithm returns $c$. Otherwise, it returns $\perp$. Clearly, the running time of this algorithm is $\exp(m)=\poly(1/\epsout)$.



\begin{remark} \label{rem:unique-dec}
   An important observation is that our decoding algorithm cannot output a wrong answer. Indeed, if  $\tilde{c}$  was obtained from $c$ by performing any number of deletions, then $c$ will be one of the codewords in $L'$ (as $\tilde{c}$ is a subsequence of $c$).
\end{remark}

\subsection{Construction of our code}

    Let $\delout>0$ and $\epsout< \delout/1400$ small enough. Let $\cCout \subset \Fqm{2}^n$ be the code given in \Cref{thm:half-linear},  with parameters $\delta=\delout$ and $\varepsilon= \epsout$.  Recall that the rate of $\cCout$ is $\rRout = (1 - \delout)/4 - \epsout$ and the code is defined over the alphabet $\Fqm{2}$ where $q = \text{poly}(1/\epsout)$. Denote $k=\rRout\cdot n$. Let $\cCin: \{0,1\}^{m\cdot \rRin} \rightarrow \{0,1\}^m$ be the code obtained in \Cref{sec:inner-code}, where $m$ is such that $\rRin m = \log(q)$ (we pick $\epsout$ small enough so that $m\geq m_0$ as in \Cref{prop:inner-code}).
    
    \begin{cnst} \label{cnst:binary-lin}
    The encoding works as follows. Given a message $x\in \mathbb{F}_q^k$ we:
    \begin{enumerate}
        \item Encode $x$ using the outer code $\cCout$ to obtain $\sigma = \cCout(x)$. Denote 
        \[
        \sigma = \left((\sigma_1, S_1\cdot \sigma_1), \ldots, (\sigma_n, S_n \cdot \sigma_n) \right)\;.
        \]
        \item Let $\inbuf$ denote a string of $2\delin m$ many zeroes.
        Encode every symbol $(\sigma_i, S_i \cdot \sigma_i)$ using the inner code to obtain $\left(\cCin(\sigma_i), \cCin(S_i \cdot \sigma_i)\right)$ and place the string $\inbuf$   between $\cCin(\sigma_i)$ and $\cCin(S_i \cdot \sigma_i)$. We refer to those $\inbuf$ strings  as \emph{inner buffers}. At the end of this step we have the string
        \[
        \cCin(\sigma_1) \circ \inbuf \circ \cCin(S_1 \cdot \sigma_1) \circ \ldots \circ \cCin(\sigma_n) \circ \inbuf \circ \cCin(S_n \cdot \sigma_n) \;.
        \]
        \item Let $\outbuf$ denote a string of $5\delin m$ many zeroes.
        Place the string $\outbuf$ between every two adjacent symbols of the form $\cCin(S_i \cdot \sigma_i) \circ  \cCin(\sigma_{i+1})$  to get
        \[
        \cCin(\sigma_1) \circ \inbuf \circ \cCin(S_1 \cdot \sigma_1) \circ \outbuf \circ \ldots \circ \cCin(\sigma_n) \circ \inbuf \circ \cCin(S_n \cdot \sigma_n)  \;.
        \] 
        We refer to those $\outbuf$ strings  as \emph{outer buffers}.
        \end{enumerate}
        The encoding of $x$ is the string
        \[\textup{ENC}(x)= \cCin(\sigma_1) \circ \inbuf \circ \cCin(S_1 \cdot \sigma_1) \circ \outbuf \circ \ldots \circ \cCin(\sigma_n) \circ \inbuf \circ \cCin(S_n \cdot \sigma_n) \;.\]
    \end{cnst}
    
    \paragraph{Rate:}
    The length of the codewords is $2mn + 2\delin mn + 5\delin m (n - 1)$ bits. Recalling that $\log(q) = m \cdot \rRin$ we get
    \begin{align} \label{eq:rate-bin}
    \mathcal{R} &= \frac{\log(q^{\rRout n})}{2mn + 2\delin mn + 5\delin m (n-1)} \nonumber \\
    &> \frac{\rRin \rRout}{2 + 7 \delin} \;.
    \end{align}
    
    The decoding algorithm is given in Algorithm \ref{alg:decode}. 
    
    \begin{algorithm}[h]\label{alg:decode}
    \SetKwInOut{Input}{input}
    \SetKwInOut{Output}{output}
    \SetNlSty{textbf}{[}{]}
    \SetAlgoNlRelativeSize{-1}
    \LinesNumberedHidden
    \DontPrintSemicolon
    \Input{Binary string $y$ which is the output of the deletion adversary on $\text{ENC}(x)$.}
    \Output{A message ${x}'\in \Fq^k$.}

    \SetAlgoLined
    \nlset{0} Set $L$ to be the empty list. \;
    \nlset{1}
    \If{$y$ is a single run of zeros}{output $\bar{0} \in \Fq^k$ and return \;}
    \nlset{2}
    Every run of zeros of length at least $4\delin m$ is identified as an outer buffer. \;
    Let ${r}_1, \ldots, {r}_t$ be the strings between the outer buffers. \;
    \label{decode:outer-buffer}
    \nlset{3} \For{every ${r}_j$}{
    \label{decode:inner-part}
    {Every run of zeros of length at least $\delin m$ and less than $4 \delin m$ is identified as an inner buffer.}\;
    \If{exactly $1$ inner buffer was identified}
    {Denote by ${c}_j$ the string before the inner buffer and by ${c}_j'$ the string after the inner buffer. In particular ${r}_j = {c_j}\circ (\text{identified inner buffer}) \circ {c}_j'$.\footnote{Note that ${c}_j, {c}_j'$ are strings that start and end with the symbol $1$.} \; 
    \If{$m - 2\delin  m < |{c}_j| \leq m$ and $m - 2\delin  m < |{c}_j'| \leq m$}
    {$a = \textup{Dec}({c}_j)$ and $b = \textup{Dec}({c}_j')$. \;
    \If{$a$ is not $\perp$ and $b$ is not $\perp$}
    {Add to $L$ the tuple $(a, b)$.}}}
    }\label{decode:blow-up:end}
    \nlset{4} 
    {Decode $L$ using the algorithm of $\cCout$ given in \Cref{thm:half-linear}.} \label{decode:outer}\; 
    \caption{Decoding algorithm for \Cref{cnst:binary-lin}}
    \end{algorithm}

\subsection{Analysis}

\begin{prop}\label{prop:bin-decod}
    The code defined in \Cref{cnst:binary-lin} can correct from $\rho\delout mn$ adversarial deletions, using Algorithm~\ref{alg:decode}, in $O(n^3)$ time.
\end{prop}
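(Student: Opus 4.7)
The plan is to argue that if the adversary performs at most $\rho\,\delout mn$ deletions, then the list $L$ produced by Steps~1--3 of Algorithm~\ref{alg:decode} satisfies $\ed(L,\sigma)\le \delout n$, where $\sigma=\text{Enc}_{\cCout}(x)$. Step~4 then recovers $\sigma$ via the decoder of \Cref{thm:half-linear}, and hence $x$; this also reduces the running-time analysis to a single invocation of that decoder.

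I would classify each outer coordinate $i\in[n]$ with $\sigma_i\ne 0$ as good or bad. Call an outer buffer \emph{broken} if more than $\delin m$ of its zeros are deleted (so fewer than $4\delin m$ remain and Step~2 misses it). Call $i$ \emph{inner-bad} if more than $\rho m$ deletions fall inside $\cCin(\sigma_i)\circ\inbuf\circ\cCin(S_i\sigma_i)$. Call $i$ good if neither adjacent outer buffer is broken and $i$ is not inner-bad. For a good position I would verify that Step~3 correctly appends $(\sigma_i,S_i\sigma_i)$ to $L$: its adjacent outer buffers are detected and carve out a chunk $r_j$ equal to $i$'s block up to at most $\rho m$ deletions; property~(2) of \Cref{prop:inner-code} implies that forging a length-$\delin m$ zero run inside a nonzero inner codeword demands at least $\rho m+1$ deletions, so no spurious inner buffer appears; the true inner buffer retains at least $\delin m$ zeros (since $\rho<\delin$) and is uniquely identified; the two halves $c_j,c_j'$ have lengths in $(m-2\delin m,m]$, because the last surviving $1$ of $\cCin(\sigma_i)$ lies in its last $\delin m$ positions (by property~(2), shifted back by at most $\rho m$ deletions) and $\delin>\rho$; finally property~(1) of \Cref{prop:inner-code} together with \Cref{rem:unique-dec} implies $\text{Dec}$ returns the correct codeword on each half. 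Coordinates with $\sigma_i=0$ encode to an all-zero block that fuses with the adjacent buffers into an outer buffer and contribute nothing to $L$; by \Cref{rem:zeros-in-hlinear} the outer decoder is unaffected by this.

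Let $X$ be the number of broken outer buffers and $Y$ the number of non-broken inner-bad positions. Since deletions in outer buffers and inner blocks are disjoint, $X\cdot\delin m+Y\cdot\rho m \le \rho\,\delout mn$, i.e., $X(\delin/\rho)+Y\le\delout n$. Each broken outer buffer merges at most two adjacent positions into a single chunk, so the number of bad positions is at most $2X+Y$; using $\delin/\rho=17/6>2$, this yields $2X+Y\le \delout n$. By \Cref{rem:unique-dec}, the inner decoder never outputs a wrong codeword, so the algorithm never inserts a spurious tuple into $L$---each bad position contributes at most one deletion to $\ed(L,\sigma)$. Therefore $\ed(L,\sigma)\le\delout n$ and Step~4 decodes $\sigma$ correctly. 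For the running time, Steps~1--3 take linear time plus at most $n$ inner decodings of cost $\exp(m)=\poly(1/\epsout)=O(1)$, while Step~4 is $O(n^3)$ by \Cref{thm:half-linear}; the total is $O(n^3)$.

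The main technical obstacle is confirming that no merged chunk (caused by broken outer buffers) slips past Step~3's guards and injects a wrong tuple into $L$. When one or more outer buffers break, the fused chunk contains several candidate zero runs of length $\ge\delin m$ (the real inner buffers of each merged position, plus any broken outer buffer whose residual zero run still exceeds $\delin m$), so in most cases the ``exactly one inner buffer'' test fails and the chunk is discarded. The delicate subcase is when the adversary spends additional deletions to kill all but one such run; here I would invoke property~(1) of \Cref{prop:inner-code}---any two length-$\ge(1-2\delin+\rho)m$ substrings of distinct inner codewords are $2\rho m$-edit-far---to argue that the resulting $c_j$ or $c_j'$ is not a subsequence of any single codeword, so $\text{Dec}$ must return $\perp$, or else fails the length guard $|c_j|,|c_j'|\le m$. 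Making this case analysis quantitative fills in the remaining edge cases; the rest of the argument is the budget accounting above.
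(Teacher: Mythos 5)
Your overall strategy mirrors the paper's—translate the bit-level deletion budget into an upper bound on $\ed(L,\sigma)$ and hand off to the outer decoder—but the central accounting step has a genuine gap that the paper handles with a different bookkeeping scheme.

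You count \emph{bad positions} ($2X+Y\leq \delout n$) and then claim ``each bad position contributes at most one deletion to $\ed(L,\sigma)$,'' invoking \Cref{rem:unique-dec} to rule out spurious tuples. This claim is the crux, and it is not established. \Cref{rem:unique-dec} only says that if $\tilde c$ is obtained from some codeword $c$ by deletions then $c$ appears in $L'$; it does not say the brute-force decoder returns $\perp$ when $\tilde c$ is a splice of two different inner codewords (which can happen when the identified inner buffer is not the true one, or when a broken outer buffer fuses two blocks). In that situation $\tilde c$ need not be a subsequence of the true codeword, and nothing stops $L'$ from containing exactly one wrong codeword. Property~(1) of \Cref{prop:inner-code} does not directly close this either: it bounds the LCS between \emph{substrings} of two distinct codewords, whereas the spliced $\tilde c$ is no longer a substring of either original codeword, so the hypotheses do not apply. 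You flag this yourself as ``the delicate subcase'' but defer it, and the deferral is exactly where the proof needs work. The paper does \emph{not} rule out fake tuples; instead it shows that producing one is expensive (at least $\delin m+\rho m+2$ deletions for $+2$ edit distance in \ref{type:1}, with similar per-unit costs for \ref{type:2} and \ref{type:3}), and the accounting is done \emph{per unit of edit distance} rather than per bad position. Under your position-based scheme, a single fake tuple makes one bad position contribute $2$ (one deletion, one insertion), and a fused pair of positions decoding to a fake tuple contributes $3$ for $2$ counted positions; either way the bound degrades to roughly $1.5\,\delout n$, which is not enough.

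A second, related omission is that your classification only tracks \emph{broken} real outer buffers. The adversary can also manufacture a \emph{fake} outer buffer inside an inner block (a fresh zero-run of length $\geq 4\delin m$), splitting one position into several $r_j$'s; this is \ref{type:3} in the paper and needs its own cost analysis. While such a position is indeed inner-bad and so counted in $Y$, a single $Y$ can now generate multiple chunks and multiple spurious/ignored entries, again breaking the ``one bad position, one unit of edit distance'' invariant. To repair the proof you would need to replace the position-counting argument with a per-unit-of-$\ed$ charging argument as in the paper (or else prove, carefully, that the length guards plus Property~(1) really do force $\perp$ in every splice scenario—which is a stronger statement than what you cite).
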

\begin{proof}
    Let $x\in\Fq^{k}$ be a message and denote by $\sigma := \left( (\sigma_1, S_1\cdot \sigma_1), \ldots, (\sigma_n, S_n \cdot \sigma_n) \right)$, the outer codeword  corresponding to $x$, i.e., $\sigma=\cCout(x)$. We first note that if $x$ is the zero message then since the adversary is allowed to perform only deletions to ENC$(x)$, the input to the algorithm is a single run of zeros. Therefore the algorithm will output the zero message as required. Thus, from now on, we assume that $x$ is not the zero message. 
    
    We will upper bound the edit distance between $\sigma$ and $L$ that is obtained after performing step~\ref{decode:inner-part} of Algorithm~\ref{alg:decode}. If it holds that $\ed(\sigma, L) \leq \delout n$, then the decoding succeeds since our outer code, $\cCout$, can correct from $\delout n$ insdel errors. 
    
    Before we continue with the proof, we note that the outer codeword, $\sigma$, might have zero symbols (which are of the form $(0,0$)). Note that such a symbol is encoded, by the  inner code, to a long run of zeros, which is then  interpreted by our algorithm as an outer buffer. As can be seen in the proof of \Cref{thm:half-linear} (see  \Cref{rem:zeros-in-hlinear}), we only care about nonzero symbols. Namely, if we denote by $\sigma^{0}$ the string obtained from $\sigma$ by deleting all the zero symbols, then as long as $\ed(\sigma^{0}, L) < \delout n$, the decoding algorithm succeeds. Thus, we do not need to insert these zero symbols to $L$.

    Assume then that $x\neq 0$.
    In Step~\ref{decode:outer-buffer} the decoding algorithm identifies outer buffers.  We say that the algorithm identified correctly the $i$th outer buffer if in Step~\ref{decode:outer-buffer} it  identified an outer buffer  that contains one of the surviving symbols of the $i$th outer buffer of ENC$(x)$, and that contains no symbol of any other outer buffer of ENC$(x)$. We call such an identified outer buffer a \emph{genuine outer buffer}. 
    Observe, that if the $i$th outer symbol is $\sigma_i = (0,0)$, then the algorithm may identify the entire run between the $(i-1)$th and the $i$th outer buffers as a single outer buffer. In this case, too we say that this is a genuine outer buffer. The reason for that will become clear during the analysis. In a nutshell, the reason for not treating it as an erroneous buffer follows from the discussion above that shows that our algorithm ignores the zero outer symbol  (see \Cref{rem:zeros-in-hlinear}).
    In all other cases, we say that the decoder identified a \emph{fake outer buffer}.  We call an outer buffer that was not identified as an outer buffer (because the adversary deleted many $0$s from it) a \emph{corrupted outer buffer}.


        


    After identifying the outer buffers in  Step~\ref{decode:outer-buffer}, we get $t$ strings ${r}_1, \ldots, {r}_t$. 
    We  distinguish between three different types of ${r}_j$s, depending on the outer buffers that the algorithm identified:
    \begin{enumerate}[label=Type-\arabic* ${r}_j$, wide=0pt, leftmargin=*]
    \item  --  there exists an $i\in [n - 1]$ such that the algorithm identified the $(i-1)$th {genuine} outer buffer before ${r}_j$ and the $i$th {genuine} outer buffer after ${r}_j$. If $j = 1$ ($t$) then we require the algorithm to identify only the right (left) outer buffer.\label{type:1}
    \item -- if the buffers surrounding ${r}_j$ are genuine outer buffers that do not correspond to consecutive outer buffers in ENC$(x)$.\label{type:2}
    \item -- if at least one of the buffers surrounding ${r}_j$ is a fake outer buffer.\label{type:3}
    \end{enumerate}


    We first study how the adversary can create a \ref{type:1} that is not decoded correctly in Step~\ref{decode:inner-part}. In what follows, for a substring $s$ of ENC$(x)$, we denote with $\tilde{s}$ the remaining subsequence of $s$ after the deletions performed by the adversary. 

    \paragraph{\ref{type:1}:}
        In this case, ${r}_j$ is the form 
        \[
        {r}_j = \widetilde{ \cCin (\sigma_i)} \circ \widetilde{\inbuf}\circ  \widetilde{ \cCin (S_i \sigma_i)} \;,
        \]
        and we assume that the (original) $i$th buffer preceding ${r}_j$ and the $(i+1)$th buffer following ${r}_j$ were identified by the algorithm. 
        
        We  say that ${r}_j$ is a \emph{surviving outer symbol} if a single inner buffer was identified inside ${r}_j$ (thus ${r}_j = {c_j}\circ (\text{identified inner buffer}) \circ {c}_j'$),  and the decoding algorithm of the inner code returns $ \sigma_i$ and $S_i\cdot \sigma_i$ when given ${c}_j$ and   ${c}_j'$, respectively. 
        If in Step~\ref{decode:inner-part} the algorithm adds to $L$ the tuple $(a, b)\neq (\sigma_i, S_i\sigma_i)$, when going over ${r}_j$, then we call ${r}_j$ a \emph{fake outer symbol}. Note that the algorithm can also ignore ${r}_j$ in Step~\ref{decode:inner-part} and in this case, we call ${r}_j$ an \emph{ignored outer symbol}. For example, if ${r}_j$ contains several runs of zeros of length $\geq \delin m$, then several inner buffers are identified inside ${r}_j$, in which case the algorithm will not add anything to $L$.  

        Our objective is to show that the adversary has to perform at least $\rho m + 1$ deletions to $\cCin (\sigma_i)\circ \inbuf \circ \cCin (S_i \sigma_i)$ in order to create a \ref{type:1}  that gets ignored by our algorithm and at least $\delin m + \rho m$ deletions in order to create a \ref{type:1} that is a fake outer symbol.
        We say that the algorithm \emph{identified correctly the inner buffer} if exactly one inner buffer was identified inside ${r}_j$ and at least one of the bits in the identified inner buffer belongs to the original inner buffer.

The following claim shows that if the inner buffer was identified correctly and the adversary performed at most $\rho m$ deletions to each of the inner codewords, then  the decoding algorithm of the inner code successfully decodes ${c}_j$ and ${c}_j'$.

\begin{claim} \label{clm:inner-code-dec}
   Assume that the algorithm identified correctly the inner buffer inside ${r}_j$ (thus, ${r}_j = {c_j}\circ (\text{identified inner buffer}) \circ {c}_j'$). Then, as long as the adversary performed $\leq \rho m$ deletions to $\cCin (\sigma_i) $ ($\cCin (S_i \cdot \sigma_i)$), the decoding algorithm of the inner code, outputs correctly $\sigma_i$ ($S_i\cdot \sigma_i$) when given ${c}_j$ (${c}_j'$).
\end{claim}

\begin{proof}
First, note that it may be the case that a string of $0$s of an inner codeword (i.e., of  ${ \cCin (\sigma_i)}$ or of $ { \cCin (S_i \sigma_i)}$) are identified as a part of the inner or outer buffers. This is because our algorithm identifies buffers whenever it encounters a long enough run of zeros. Therefore, if $\cCin(\sigma_i)$ starts with a run of zeros, then this run is identified by our algorithm as part of the first outer buffer. The same phenomenon happens if $\cCin(\sigma_i)$ ends with a run of zeros, only this time the zeroes are identified as part of the inner buffer. 
Denote by $\cCin (\sigma_i)'$ the substring of $\cCin (\sigma_i)$ obtained by deleting the first and last run of zeros. By \Cref{prop:inner-code}\eqref{prop:inner-2}, $\cCin (\sigma_i)'$ is of length $\geq (1 - 2(\delin - \rho))m$. 

Note that the adversary has the option to delete $1$s from the beginning (or end) of $\cCin (\sigma_i)'$ and as a result, further $0$s will be identified as part of a buffer by the algorithm. For example, assume $11010010$ to be the first eight bits of $ \cCin (\sigma_i)'$ and further assume that the adversary deletes the first three $1$s from the left. In this case, we have ${\color{red}11}{\color{blue}0}{\color{red}1}{\color{blue}00}10$, where the red $1$s were deleted by the adversary and the blue $0$s are interpreted, by the algorithm, as part of the left outer buffer. 
Denote by $b_1$ the number of consecutive $1$s deleted from the beginning of $\cCin (\sigma_i)$ and by $e_1$ the number of consecutive $1$s deleted from the end of $\cCin (\sigma_i)$ where $b_1+e_1 \leq \rho m$, then, the number of zeros merged to the buffer is at most 
\[
\ceil{(b_1 + 1) / (\rho m + 1 )} (\delin m - \rho m) + \ceil{(e_1+1) / (\rho m + 1)}(\delin m - \rho m) = 2(\delin m - \rho m) \;.
\]
Denote the resulting string (after removing the first and last runs of $0$s that were created by the adversary after deleting $b_1+e_1$ $1$s) by $\cCin (\sigma_i)''$ and note that  $\cCin (\sigma_i)''$ is a substring of $\cCin (\sigma_i)$ of length $\geq (1 - 2\delin + \rho )m$. Now, the adversary can perform another $ \rho m -(b_1+e_1)$ deletions to the rest of the bits of $\cCin (\sigma_i)''$. In total, LCS$({c}_j,  \cCin(\sigma_i)'') \geq |\cCin(\sigma_i)''|-\rho m$. \Cref{prop:inner-code}\eqref{prop:inner-1} guarantees that we decode this corrupted codeword successfully.
\end{proof}

Thus, in order for the adversary to make the algorithm ignore ${r}_j$ or interpret it as a fake outer symbol, it must either
\begin{enumerate}[label=Case\,\arabic*:, wide=0pt, leftmargin=*]
   \item delete enough $0$s so that no inner buffer is identified, or \label{case:no-inner}
   \item delete many $1$s so that more than one inner buffer is identified, or \label{case:many-inner}
   \item delete bits so that only a single inner buffer is identified, but that the decoding algorithm fails.\label{case:many-dele}
\end{enumerate}


We study each of these cases separately.\\

Analysis of \ref{case:no-inner} In this case, the adversary must have deleted at least $\delin m + 1$ bits from the original inner buffer. In this case, ${r}_j$ is ignored by the algorithm.\\

Analysis of \ref{case:many-inner}  In this case, the algorithm identifies (at least) two inner buffers in ${r}_j$, and as a result, ignores it. \Cref{prop:inner-code}\eqref{prop:inner-2} implies that the adversary must delete at least $\rho m + 1$ many $1$s from an inner codeword in order to create a second long run of $0$s that is interpreted as an inner buffer.\\

Analysis of \ref{case:many-dele} We now assume that the algorithm identified a single inner buffer. If this inner buffer does not contain any bit of the original inner buffer, then, by the two previous cases, the adversary must have deleted at least $\delin m + 1$ many $0$ from the original inner buffer and additionally at least $\rho m + 1$ many $1$s from an inner codeword. In total, at least $\delin m + \rho m +2$ many bits were deleted. In this case, either ${r}_j$ is ignored by the algorithm, or it becomes a fake outer symbol.

If  the algorithm correctly identified the inner buffer, then \Cref{clm:inner-code-dec} implies that, for the algorithm to fail to decode, the adversary must have deleted more than $\rho m$ bits inside  $\cCin (\sigma_i) $ or $\cCin (S_i \cdot \sigma_i)$. In particular, the adversary must perform more than $\rho m$ deletions for the decoding to fail. Notice that in this case, the decoding algorithm of the inner code will output $\perp$ and will not return a fake outer symbol.\\

To conclude, if the adversary wishes to create a \ref{type:1} that is an ignored outer symbol, it needs to perform at least $\rho m + 1$ deletions. In order to create a \ref{type:1} that is a fake outer symbol, the adversary needs to delete at least $\delin m + \rho m+2$ many bits.

Observe that  an ignored outer symbol increases  $\ed(\sigma^0, L)$ by $1$ since the corresponding outer symbol, $(\sigma_i, S_i\cdot \sigma_i)$, was not added to $L$. A \ref{type:1} that is a fake outer symbol increases $\ed(\sigma^0, L)$ by $2$ since instead of the original outer symbol, a fake outer symbol is added to $L$.  Thus, the number of deletions that the adversary has to ``pay'' in order to increase  $\ed(\sigma^0, L)$ by $1$, in the case of \ref{type:1}, is at least 
\[\min\left\{ \rho m + 1, \frac{\delin m + \rho m +2}{2} \right\} = \rho m + 1 \;,
\]
where the equality follows as $\delin > 2.5 \rho$. Thus, in the case of \ref{type:1}, it is more ``economical'' for the adversary to make the algorithm ignore it rather than make it a fake outer symbol.

\paragraph{\ref{type:2}:}
    In this case, we assume that ${r}_j$ is such that the outer buffer identified before ${r}_j$ and the outer buffer identified after ${r}_j$ are genuine but not consecutive (and there is no fake outer buffer in between). Assume that the  outer buffer before $r_j$ corresponds to the $i_1$th outer buffer in ENC$(x)$ and that the  outer buffer after $r_j$ corresponds to the $i_2$th original outer buffer. In particular, the $i_2-i_1-1$ outer buffers between the $i_1$th and $i_2$th  were corrupted by the adversary.

    We now consider how many deletions the adversary had to perform in order for the algorithm to return a fake outer symbol. Note that the substring of the original codeword that starts at the first $1$ following the $i_1$th outer buffer and ends at the last $1$ preceding the $i_2$th outer buffer is of length at least  
    \begin{align*}
     2((1-\delin +\rho)+ 2\delin  +1)m  &+ 5\delin m + (i_2-i_1-2)(2+7\delin )m\\ &=(i_2-i_1)(2+7\delin )m - (7\delin -2\rho )m  \;.
    \end{align*}
    Observe that for the algorithm to not ignore $r_j$ we must have that $|r_j|< 2m+ 4\delin m $. It follows that for the algorithm not to ignore $r_j$, the adversary must have deleted at least 
    \[(i_2-i_1)(2+7\delin )m - (7\delin -2\rho )m-(2+ 4\delin  )m=(i_2-i_1)(2+7\delin )m - (2  + 11\delin -2\rho)m
    \]
    many bits. Creating such a fake outer symbol increases $\ed(\sigma^0, L)$ by $i_2 - i_1+1$ as it corresponds to deleting the outer symbols in locations $i_1,\ldots,i_2-1$ and an insertion of the fake outer symbol.
    
    If the adversary only corrupted the outer buffers between the $i_1$th and the $i_2$th outer buffers, without creating a fake outer symbol, then it must have deleted at least $(i_2-i_1-1)(\delin m +1)$ many $0$s. Indeed, to corrupt a single outer buffer (at least) $\delin m +1$ many $0$s have to be deleted. Such a behaviour by the adversary increases $\ed(\sigma^0, L)$ by $i_2 - i_1$ as it is equivalent to deleting the outer symbols in locations $i_1,\ldots,i_2-1$.

     Thus, the number of deletions that the adversary has to ``pay'' in order to increase  $\ed(\sigma^0, L)$ by $1$, in the case of \ref{type:2}, is at least 
     \begin{align*}
     &\min\left\{\frac{(i_2-i_1)(2+7\delin )m - (2  + 11\delin -2\rho)m}{(i_2-i_1+1)} \, ,\, \frac{(i_2-i_1-1)(\delin m +1)}{(i_2-i_1)} \right\}  \\
     &=\frac{(i_2-i_1-1)(\delin m +1)}{(i_2-i_1)}\,.
     \end{align*}
    Observe that $\frac{(i_2-i_1-1)(\delin m +1)}{(i_2-i_1)}>\rho m + 1$ and hence the adversary has to make more deletions in the case of \ref{type:2} than in the case of \ref{type:1} in order to increase  $\ed(\sigma^0, L)$ by $1$.

 
        \paragraph{\ref{type:3}:} Let us assume without loss of generality that the outer buffer to the left of $r_j$ is a fake outer buffer. 
        
        To create a fake outer buffer, the adversary has to create a run of $0$s of length $\geq4 \delin m$ such that all the bits in this run do not belong to any outer buffer in ENC$(x)$ (or that belong to two different outer buffers in ENC$(x)$. We treat this case later). The adversary faces two options; it can either merge many $0$s to an inner buffer or create a run of $0$s of length $\geq 4\delin m$ inside an inner codeword. By \Cref{prop:inner-code}\eqref{prop:inner-2}, the second case requires at least $4\rho m+4$ many deletions.
        In the first case,  the adversary needs to merge $\geq 2 \delin m$ many $0$s to an inner buffer. We claim that in this case, it must delete more than $\rho m + 1$ many $1$s from the inner codewords. Indeed, by \Cref{prop:inner-code}\eqref{prop:inner-2}, any $\delin m$ coordinates of an inner codeword contain at least $\rho m + 1$ many $1$s. As at least $\delin m$ $0$s must come from either the inner codeword to the left of the inner buffer or from the one to the right of the inner buffer, the claim follows.
        
        Now that we know the ``cost'' of creating a fake outer buffer, we shall analyze several cases. Denote with $i_1$ the index such that the last bit of the fake outer buffer came from the encoding of $(\sigma_{i_1},S_{i_1}\cdot \sigma_{i_1})$. 
        \begin{enumerate}
            \item The outer buffer to the right of $r_j$ is a genuine outer buffer corresponding to the $({i_1})$th outer buffer in ENC$(x)$: In this case it is not hard to verify that $|c_j|+|c_j'|< 2m-4\delin m$ and $r_j$ gets ignored. This increases $\ed(\sigma^0, L)$ by $1$, and, by the analysis above, the adversary had to make at least $\rho m+1$ many deletions.
            
            \item The outer buffer to the right of $r_j$ is a genuine outer buffer, but not the $i_1$th one: 
            Let us assume 
            that the genuine outer buffer to the right of $r_j$ is the $i_2$th outer buffer (observe that we must have $i_2>i_1$). We now consider two subcases:
            \begin{enumerate}
                \item The algorithm ignored $r_j$: As all the outer buffers between the $i_1$th and the $i_2$th were corrupted, the adversary must have deleted at least $(i_2-i_1)(\delin m+1)$ many $0$s. This increases  $\ed(\sigma^0, L)$ by at most $i_2-i_1+1$ as it causes the deletion of all symbols in locations $i_1+1,\ldots,i_2$, and potentially also the $i_1$th symbol. Thus, the average cost of increasing the edit distance by $1$ in this case is at least $\frac{(i_2-i_1)(\delin m+1)}{i_2-i_1+1}> (\delin m+1)/2 > \rho m +1$.
                
                \item The algorithm decoded $r_j$ to a fake outer symbol: Similarly to the analysis of \ref{type:2}, we see that  in this case, as the algorithm has to identify a single inner buffer inside $r_j$, and the length of $r_j$ is $|r_j|\leq 2m+4\delin m$, the adversary must have deleted at least 
                \[(i_2-i_1)(7\delin +2)m - (2+4\delin)m =(i_2-i_1-1)(7\delin +2)m+3\delin m\]
                many bits. This increases $\ed(\sigma^0,L)$ by at most $i_2-i_1+2$ since (as in the previous case) this caused at most $i_2-i_1+1$ many deletions and a single insertion. Thus, the average cost of increasing the edit distance by $1$ in this case is at least 
                $\frac{(i_2-i_1-1)(7\delin +2)m+3\delin m}{i_2-i_1+2}\geq \delin m > \rho m +1$.
            \end{enumerate}
            
            \item   The outer buffer to the right of $r_j$ is also a fake outer buffer: Let us assume  that the outer buffer to the right or $r_j$ was created inside the encoding of the $i_2$th outer symbol.
               We analyze two cases:

            \begin{enumerate}
                \item $i_2=i_1$: In this case, it is not hard to see that $r_j$ is too short and hence gets ignored by the algorithm. This increases $\ed(\sigma^0, L)$ by $1$. Note that by the analysis above, the adversary had to make at least  $(4\rho m + 4)+(\rho m +1)=5\rho m +5$ many deletions.
                
                \item $i_2>i_1$: Similar calculations as in the case of \ref{type:2} show that in this case, the adversary has to make more than $\rho m +1$ many deletions in order to increase  $\ed(\sigma^0, L)$ by $1$.  Indeed, let us assume that the first bit in the second fake outer buffer came from $(\sigma_{i_2},S_{i_2}\cdot \sigma_{i_2})$. It follows that in order to corrupt all the outer buffers between the $i_1$th and the $(i_2-1)$th outer buffers, the adversary must delete at least $(i_2-i_1)\delin m$ many bits. In this case, if $r_j$ is not interpreted as a fake outer symbol, then
                $\ed(\sigma^0,L)$ grew by at most $i_2-i_1+1$. If $r_j$ was decoded to a fake outer symbol, then we note that it must be the case that at most one inner buffer was identified inside $r_j$. Thus, at least $(i_2-i_1-1)4\delin m$ many more bits had to be deleted. In addition, we recall that at least $\rho m +1$ deletions occurred to create the outer buffer to the left of $r_j$ (we do not charge anything for the right one in order to avoid double-counting). Calculating, we see that the average cost of increasing the edit distance by $1$ in either of the cases is larger than $\rho m +1$.
                \end{enumerate}
                    \end{enumerate}

        Finally, we note that if the fake outer buffer before $r_j$ contains bits from two different original outer buffers, the $i_1$th and the $i_2$th, then at least  $2(i_2-i_1)\frac{(1-\delta+\rho)\rho}{\delta}m$ many $1$s had to be deleted. Such an operation increases the edit distance by at most  $i_2-i_1$. In addition, we have to repeat the analysis above and take into consideration the cost of creating the buffer to the right of $r_j$, and the additional effect of $r_j$ on the edit distance (i.e., whether $r_j$ was ignored or decoded as a fake outer symbol, etc.). It is clear that in this case, the cost of increasing the edit distance by $1$ is much larger than $\rho m +1$.  \\

        In conclusion, in all cases, in order to increase  $\ed(\sigma^0, L)$ by $1$, the adversary has to make at least $\rho m +1$ many deletions. 
        Since the adversary can make at most $\delout \rho mn$ deletions, it follows that $\ed(\sigma^0, L) < \delout n$. Hence, by the assumption on the outer code, Step~\ref{decode:outer} of Algorithm~\ref{alg:decode} returns the correct message. This completes the correctness part of \Cref{prop:bin-decod}. All that is left is to analyze the running time complexity of the algorithm.

        \paragraph{Running time:} The claim about the running time follows by first noting that Step~\ref{decode:outer-buffer}, in which we identify the outer buffers, runs in linear time. Secondly, for each $r_j$, the run time of Step~\ref{decode:inner-part} is determined by the cost of the brute force decoding algorithm. This algorithm runs in exponential time in $m$, where $m = \poly(1/\epsout)$. Hence, Step~\ref{decode:inner-part} runs in time $n\cdot \poly(1/\epsout)$. Finally, according to \Cref{thm:half-linear}, the decoding algorithm of the outer code runs in time $O(n^3)$. In conclusion, the running time of the decoding algorithm is $O(n^3)$. This concludes the proof of \Cref{prop:bin-decod}.

        \subsection{Proof of \Cref{thm:bin-linear}}

        \Cref{prop:bin-decod} implies that the code constructed in \Cref{cnst:binary-lin} can decode from $\rho \delout mn$ many deletions.  By Equation~\eqref{eq:rate-bin}, its rate is $\frac{\rRin \rRout}{2 + 7\delin}$.
        
          Recall that $\epsout < \delout/1400$,  $\delin = 1/6$, $\rho = 1/17$,  $\rRout = (1 - \delout)/4 - \epsout$ and $\rRin = \delin/16$. It follows that the rate of our code is 
    \begin{align*}
    {\mathcal R}&\geq \frac{\rRin \rRout}{2 + 7\delin} \\
&= \frac{1}{304} \cdot \left( \frac{1 - \delout}{4} - \epsout \right) \\
&\geq \frac{1}{304} \cdot \left( \frac{1 - 1.0029\delout}{4}\right)
\;,
\end{align*} 
and it can correct from more than $\delta = \delout \rho/(2 + 7\delin) > \delout/53.84$ fraction of worst-case deletions. Thus, we conclude that the final rate-error trade-off is
\[
\mathcal{R} \geq \frac{1 - 54 \cdot \delta}{1216} \;.
\]
\end{proof}

	\section{Open questions}
		In this paper, we studied linear codes that can handle insdel errors.
		Our main goal is, naturally,  to construct codes that get close (or match) the half-Singleton bound. 
		Over small alphabets, we constructed efficient linear codes that have relatively high rate compared to previous constructions. 
		We still do not have explicit constructions of linear codes over small fields that achieve the half-Singleton bound. As far as we know, \Cref{thm:full-linear} is the best explicit and efficient construction of linear insdel code over small fields. 
		Thus, the main open question is to construct efficient linear codes that match or get closer to the half-Singleton bound.
	\bibliographystyle{alpha}
	\bibliography{linearinsdel}

\end{document}